\documentclass[conference]{IEEEtran}
\IEEEoverridecommandlockouts

\usepackage{cite}
\usepackage{url}
\usepackage{amssymb}           

\usepackage{amsthm}
\usepackage{amsmath,amssymb,amsfonts}

\usepackage[linesnumbered,ruled,vlined]{algorithm2e}
\usepackage{graphicx}
\usepackage{textcomp}
\usepackage{balance}
\usepackage{xcolor}
\usepackage{subfig}
\usepackage{multirow}
\usepackage{makecell}
\usepackage{siunitx}
\usepackage{color}
\usepackage{colortbl}
\def\BibTeX{{\rm B\kern-.05em{\sc i\kern-.025em b}\kern-.08em
    T\kern-.1667em\lower.7ex\hbox{E}\kern-.125emX}}
\usepackage{hyperref}

\newcommand{\deepcolor}{\cellcolor[rgb]{0.8745, 0.8235, 0.7137}}
\newcommand{\lightcolor}{ \cellcolor[rgb]{0.9882,0.9373,0.8196}}

\newtheorem{theorem}{\textbf{Theorem}}

\SetKwInOut{Input}{Input}
\SetKwInOut{Output}{Output}
\newcommand{\eat}[1]{}
\newcommand{\stitle}[1]{\noindent{\bf #1}}
\title{Fast Tuning the Index Construction Parameters of Proximity Graphs in Vector Databases}

\author{Wenyang~Zhou,
        Jiadong~Xie,
        Yingfan~Liu,
        Zhihao~Yin,
        Jeffrey~Xu~Yu,\\
        Hui~Li,
        Zhangqian~Mu,
        Xiaotian~Qiao,
        and Jiangtao~Cui

\thanks{This work was supported by projects funded by National Natural Science Foundation of China (NSFC) under Grants 62002274. (Corresponding authors: Yingfan Liu)}

\thanks{Wenyang Zhou, Yingfan Liu, Zhihao Yin, Hui Li, Zhangqian Mu, Xiaotian Qiao,and Jiangtao Cui are with the School of Computer Science and Technology, Xidian University, Xi'an 710126, China (e-mail: wenyangchou@outlook.com; liuyingfan@xidian.edu.cn)}

\thanks{Jiadong Xie is with the Chinese University of Hong Kong, Hong Kong, China (e-mail: jdxie@se.cuhk.edu.hk).}

\thanks{Jeffrey Xu Yu is with the Hong Kong University of Science and Technology (Guangzhou), Guangzhou, China (e-mail: jeffreyxuyu@hkust-gz.edu.cn)}
}

\begin{document}

\maketitle

\begin{abstract}
$k$-approximate nearest neighbor search ($k$-ANNS) in high-dimensional vector spaces is a fundamental problem across many fields. With the advent of vector databases and retrieval-augmented generation, $k$-ANNS has garnered increasing attention. Among existing methods, proximity graphs (PG) based approaches are the state-of-the-art (SOTA) methods. However, the construction parameters of PGs significantly impact their search performance. Before constructing a PG for a given dataset, it is essential to tune these parameters, which first recommends a set of promising parameters and then estimates the quality of each parameter by building the corresponding PG and then testing its $k$-ANNS performance. Given that the construction complexity of PGs is superlinear, building and evaluating graph indexes accounts for the primary cost of parameter tuning. Unfortunately, there is currently no method considered and optimized this process.
In this paper, we introduce FastPGT, an efficient framework for tuning the PG construction parameters. FastPGT accelerates parameter estimation by building multiple PGs simultaneously, thereby reducing repeated computations. Moreover, we modify the SOTA tuning model to recommend multiple parameters at once, which can be efficiently estimated using our method of building multiple PGs simultaneously. 
Through extensive experiments on real-world datasets, we demonstrate that FastPGT achieves up to 2.37x speedup over the SOTA method VDTuner, without compromising tuning quality.
\end{abstract}

\begin{IEEEkeywords}
Index Construction Parameter Tuning, Approximate Nearest Neighbor Search, Proximity Graphs
\end{IEEEkeywords}

\section{Introduction}
With the breakthrough in learning-based embedding models, objects such as text chunks~\cite{word2vec} and images~\cite{nasrabadi1988image} can be embedded into high-dimensional vectors, capturing their semantics. This advancement has shifted the task of finding similar objects to $k$-approximate nearest neighbor ($k$-ANN) search ($k$-ANNS) in high-dimensional spaces, which has been a fundamental problem in fields such as information retrieval~\cite{HuangSSXZPPOY20,li2021embedding}, recommendation systems~\cite{OkuraTOT17}, and retrieval-augmented generation~\cite{REALM,RAG-ACL,liu2024retrievalattentionacceleratinglongcontextllm,deng2025alayadb,zhang2025pqcache}.
Specifically, given a dataset $D$ containing vectors in $\mathbb{R}^d$ space and a query vector \( q \in \mathbb{R}^d \), $k$-ANNS aim at returning $k$ vectors that are sufficiently close to the query vector $q$ among vectors in $D$.

During the past decades, a bulk of $k$-ANNS approaches have been proposed in the literature, including tree based indexes~\cite{Kdtree,Rtree,Xtree,Srtree}, hash based indexes~\cite{Datar2004,MPlsh,LSB,C2LSH,SKLSH}, invert index based indexes~\cite{IMI,PQ}, and proximity graph (PG) based indexes (PG)~\cite{Hnsw,Nsg,diskann,FastPG,ALMG,liu2025privacy}. 
According to recent works~\cite{Dpg, survey2021, PGsurvey25}, PG based methods significantly outperform other methods in $k$-ANNS performance. 
Therefore, in this paper, we focus on the PG-based approaches for $k$-ANNS.
%
Existing PG-based methods could be divided into three categories~\cite{FastPG}: (1) $k$-nearest neighbor graph (KNNG)~\cite{KGraph,MyKNNGCsurvey} that simply builds directed edges from each vector in $D$ to its $k$-ANN; (2) relative neighborhood graph (RNG)~\cite{Nsg, Nssg, diskann, taumg, ALMG} assumes the full knowledge of $D$ and builds undirected edges between each vector and its pruned set of close neighbors via various pruning strategies; and (3) navigable small world graph (NSWG)~\cite{Hnsw,Sw} that assumes no prior knowledge of $D$ and builds the graph by inserting each vector into the graph one by one. In general, KNNG has the worst $k$-ANNS performance compared to RNG and NSWG~\cite{Dpg, survey2021}. Hence, in this paper, we primarily study the issues with NSWG and RNG.
%
%
For all PG-based methods, their $k$-ANNS performance is notably sensitive to their index construction parameters. Take, for instance, HNSW, a representative NSWG method shown in Figure~\ref{fig:cp_sensitivity-a}, which has two construction parameters: $M$ for the out-degree limit in its graph index and $efc$ for the search pool size. Similarly, Vamana, a representative RNG method depicted in Figure~\ref{fig:cp_sensitivity-b}, involves three construction parameters: $M$, serving as the out-degree limit in its graph index, $L$ for the search pool size, and $\alpha$ for its pruning strategy. The results show both are significantly affected by their construction parameters, underscoring the importance of tuning the index construction parameters of PGs for optimal search performance in practice.

\begin{figure}[t]
\centering
\subfloat[HNSW on Sift]{\includegraphics[width=0.47\linewidth]{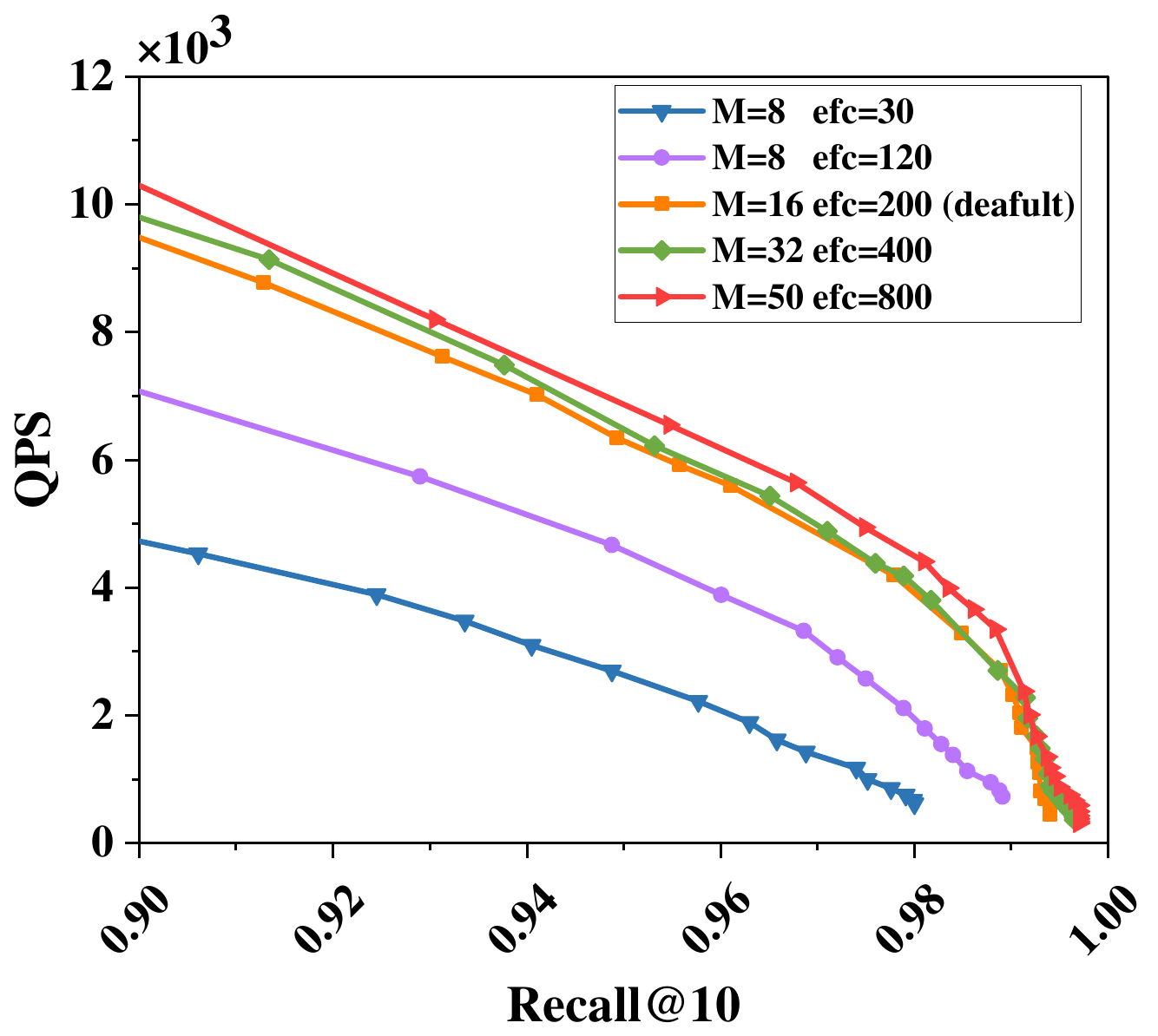}
\label{fig:cp_sensitivity-a}}
\hspace{0.1cm}
\subfloat[Vamana on Sift]{\includegraphics[width=0.47\linewidth]{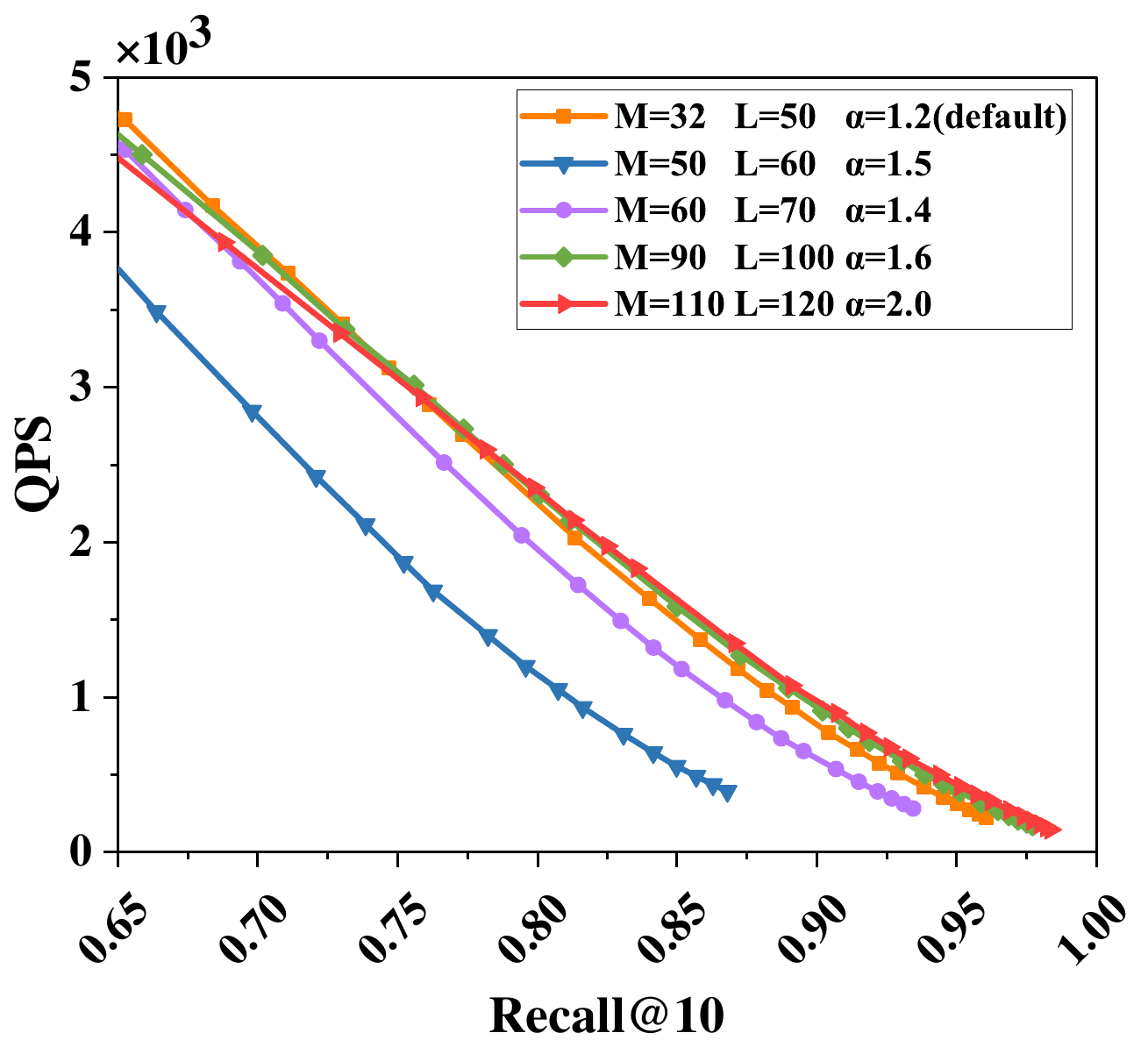}
\label{fig:cp_sensitivity-b}
}
\caption{The effects of construction parameters on $k$-ANNS performance of PGs. QPS indicates the queries processed per second, while $Recall@k$ is the accuracy of $k$-ANN obtained.}
\label{fig:cp_sensitivity}
\end{figure}

Existing tuning works for PG-based methods contain two steps: (1) parameter recommendation that recommends promising candidates of construction parameters, and (2) parameter estimation that evaluates the quality of each candidate by testing the $k$-ANNS performance of the PG built accordingly.
Existing works all focus on parameter recommendation and can be divided into two categories: heuristic-based methods and learning-based methods. The former employs heuristics to fast generate a set of parameter candidates, while the latter employs machine learning techniques to produce high-quality candidates.
Specifically, as the classic heuristic-based method, Grid Search~\cite{liashchynskyi2019grid} samples a sufficient number of candidates from the parameter space in a grid manner.
Nevertheless, because the construction cost of a PG increases superlinearly with the number of vectors in $D$, Grid Search suffers from tuning inefficiencies due to the vast number of sampled candidates, which necessitate the creation of numerous PGs.
To reduce the number of candidates, Random Search~\cite{Bergstra_Bengio_2012} randomly samples a specific number of candidates from the parameter space, improving tuning efficiency at the expense of tuning quality.
{The learning-based methods like VDTuner~\cite{yang2024vdtuner}, OtterTune \cite{van2017automatic}, and PGTuner~\cite{PGTuner} all consider expediting parameter tuning by leveraging distinct learning techniques to decrease the number of parameters recommended through a polling-and-abandonment strategy.}
\eat{
{\color{red} TODO: motivations, not introduce here?}
The search performance on a PG is pretty sensitive to its construction parameters, as illustrated in Figure~\ref{fig:effect_cp}. 
Here, we take two widely used PGs, HNSW~\cite{Hnsw} and Vamana~\cite{diskann}, as examples, and conduct experiments on two datasets, Sift and Glove, for benchmarking $k$-ANNS on high-dimensional vectors, whose details could be found in Section~\ref{sec:exp}. Two performance metrics are employed to measure the search performance, i.e., queries-per-second (QPS) and $Recall@k$ for efficiency and accuracy, respectively. 
We can find that the construction parameters $efc$ and $M$ of HNSW significantly affect its search performance. Similar phenomena could be seen on Vamana and even other PGs. 
Hence, it is crucial to tune the construction parameters for each PG for their optimal search performance. 
}
However, as previously discussed, existing works tend to overlook the expenses associated with parameter estimation, which constitutes the majority of the tuning cost. For example, as illustrated later in Table~\ref{tab:time_breakdown}, in VDTuner, 98.67\% of the time cost arises from parameter estimation, i.e., the multiple PG constructions, on {Sift}.
Consequently, they all suffer from low tuning efficiency in practice.

To address this issue, we focus on accelerating parameter estimation by introducing a novel model-agnostic tuning framework. First, we explore a common manner in the parameter recommendations across various tuning methods: they generate similar sets of parameter candidates, leading to shared distance computations on similar graph structures during construction. However, storing all distance values becomes impractical due to potentially extensive memory requirements, reaching up to $O(n^2)$ for multiple PGs. To resolve this limitation, we propose a strategy of grouping PGs for parameter estimation to leverage the structural overlap among them, thereby enhancing parameter estimation efficiency. Additionally, we introduce a deterministic random strategy to further maximize the overlap among multiple PG indexes.
However, it is still non-trivial to design such a scheduling mechanism, due to two factors: (1) the SOTA recommendation models such as VDTuner~\cite{yang2024vdtuner} generate each candidate in a greedy manner, whose quality is further used to refine the model, and (2) PGs have various construction methods, making it challenging to create a method appliable for all PGs. 
For the first factor, we propose extending the SOTA method VDTuner to recommend multiple parameters per iteration, thereby building multiple PGs accordingly. 
For the second factor, we first analyze the construction process of existing PGs, and figure out two operations contributing to the major building cost, i.e,  (1) \texttt{Search} that conducts $k$-ANNS for each vector $u \in D$ on an initially built KNNG, and (2) \texttt{Prune} that prunes redundant neighbors in the $k$-ANN of each $u$. Next, by leveraging shared computations in \texttt{Search} and \texttt{Prune} phases, we design efficient multiple $k$-ANNS methods and multiple pruning methods, and then seamlessly integrate them into the construction of various PGs. 
Finally, we conducted extensive experiments on real-world datasets to demonstrate the advantages of our proposed framework, FastPGT. According to our experiments, FastPGT achieves up to 2.37x speedup over the SOTA methods in tuning efficiency while producing candidate parameters of comparable or even better quality.

Our principal contributions are as follows.
\begin{itemize}
    \item We identify the limitations of existing solutions in parameter tuning and the major challenges that need to be addressed in parameter tuning for PG index construction.
    \item We propose FastPGT, a novel PG construction tuning framework that can recommend multiple high-quality parameters and efficiently estimate their quality by simultaneously building multiple PGs.
    \item We conduct extensive experiments to validate the effectiveness of FastPGT. According to our experimental results, compared to SOTA methods, FastPGT significantly accelerates the parameter tuning for various PGs, while not compromising the tuning quality.
\end{itemize}

\section{Preliminaries}
\label{sec:preli}

Given a dataset $D \subset \mathbb{R}^d$ and a query $q \in \mathbb{R}^d$, $k$-approximate nearest neighbor ($k$-ANN) search ($k$-ANNS) returns $k$ vectors in $D$ that are sufficiently close to $q$. In this work, we use the Euclidean distance to measure the distance between two vectors $u$ and $v$ in $D$, which is denoted as $\delta(u, v)$. 

According to recent studies~\cite{AumullerBF20,LiZAH20,Dpg,survey2021, PGsurvey25}, proximity graph (PG) based methods are considered as the SOTA methods for $k$-ANNS. A PG \(G = (V, E)\) of dataset $D$ is a directed graph, where $V$ denotes its vertex set and $E$ is its edge set. Each $u \in V$ uniquely represents a vector in $D$, and an edge $(u, v) \in E$ indicates that $v$ is a close neighbor of $u$ in $D$. \(N_G(u)\) represents the set $\{v\in V \mid (u, v) \in E\}$, i.e., the out-neighbors of $u$ in $G$. During the past decade, a bulk of PGs~\cite{KGraph, Nsg, Hnsw, Dpg, diskann, Nssg, taumg, CAGRA, ALMG} have been proposed, which share the same vertex set $V$ but have distinct edge sets $E$ due to their different edge selection strategies. 

\begin{algorithm}[t]
\caption{KANNS($G, q, k, ep, ef$)}
\label{alg:knn_search}
\Input{A PG $G$, query $q$, $k$ for $k$-ANN, the entering point $ep$, and a parameter $ef$ for pool size}
\Output{$k$-ANN of $q$}
$i \gets 0$ \;
$pool[0] \gets (ep, \delta(q, ep))$\;
\While{$i < ef$}{
     $u \gets pool[i]$\;
    \For{each $v \in N_G(u)$}{
        insert $(v, \delta(q, v))$ into $pool$\;
    }
    sort $pool$ and keep the $ef$ closest neighbors\;
    $i \gets$ index of the first unexpanded point in $pool$\;
}
\Return $pool[0, \ldots, k-1]$\;
\end{algorithm}


Although various PGs have different graph structures, they share the same $k$-ANNS algorithm, as shown in Algorithm~\ref{alg:knn_search}. It starts from an entry point $ep$ (line 2), which is specified in advance or randomly selected, and greedily approaches the query $q$. A sorted array $pool$ of size $ef \geq k$, which contains the currently found closest neighbors, is maintained and takes $ep$ as its first member (line 2). The main loop (lines 3-8) iteratively extracts the closest unexpanded neighbor $u$ from $pool$ (line 4å) and then expands it (lines 5-7), where each $v \in N_G(u)$ is treated as a $k$-ANN candidate and verified by an expensive distance computation. Once all the $ef$ nodes in $pool$ have been expanded, the search terminates (line 3) and then returns the first $k$ neighbors in $pool$ as the $k$-ANN results of $q$ (line 9).

\subsection{PG Construction Methods}
\label{ssec:pg_con}

In this part, we briefly review the PG construction algorithms. First, PGs could be roughly divided into three categories: $k$-nearest neighbor graph (KNNG), relative neighborhood graph (RNG), and navigable small world graph (NSWG)~\cite{FastPG}. KNNG builds directed edges from each $u \in D$ to its $k$-ANN, but suffers from relatively poor search performance~\cite{Dpg, survey2021}. Hence, we focus on RNG and NSWG in the rest of this work, which are SOTA methods among PGs. 


\begin{figure}[t]
\centering
\subfloat[RNG construction process]{
\label{fig:illus_rng}
\includegraphics[width=\linewidth]{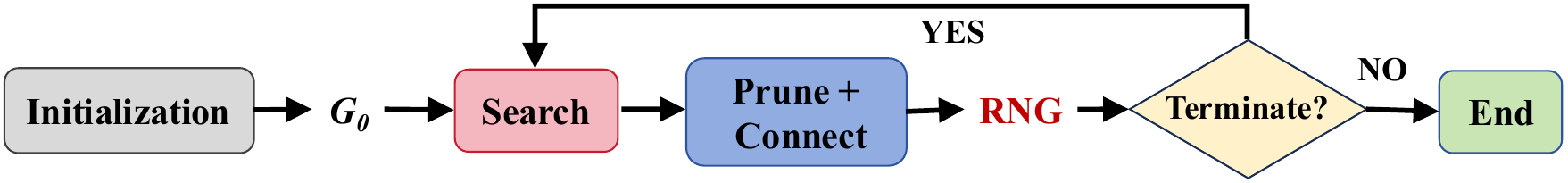}
}
\\
\subfloat[NSWG construction process]{
\label{fig:illus_nswg}
\includegraphics[width=0.8\linewidth]{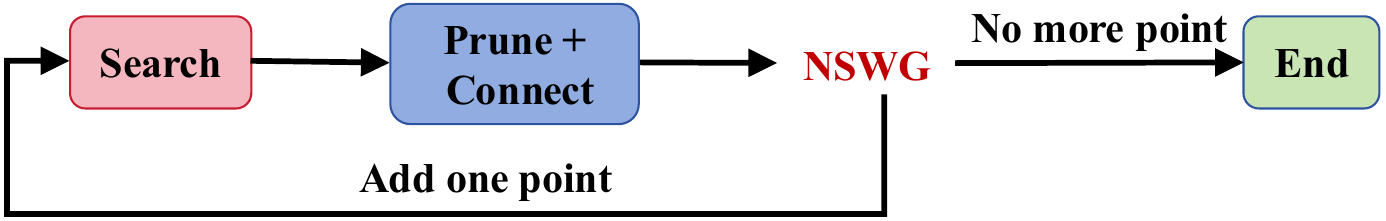}
}
\caption{The illustration of the PG construction methods.}
\label{fig:PG construction}
\end{figure}

\begin{algorithm}[t]
\small  
\SetFuncSty{textsf}
\SetArgSty{textsf}
\caption{Prune($u,C(u),M$, $\alpha$)}
\label{alg:prune}
\Input{a vertex $u$, its candidate neighbor set $C(u)$, out-degree limit $M$ and a parameter $\alpha$}
\Output{a pruned neighbor set of $u$}
$PN\leftarrow \emptyset$\;
\For{each $v \in C(u)$ in the ascending order of $\delta(u,v)$}
{
    $Flag\leftarrow$ false\;
    \For{each $w \in PN$}
    {
        \If{$\alpha \cdot\delta(v,w)<\delta(u,v)$}
        {
            $Flag\leftarrow$ true\;
        }
    }
    \If{$Flag=$ false} {
        $PN\leftarrow PN\cup \{v\}$\;
    }
    \textbf{if} $|PN|\ge M$ \textbf{then break}\;
}
\Return{$PN$}\;
\end{algorithm}

RNG methods such as NSG~\cite{Nsg}, DPG~\cite{Dpg}, NSSG~\cite{Nssg} and Vamana~\cite{diskann} assume that they have the full knowledge $D$ in advance. As illustrated in Figure~\ref{fig:illus_rng}, their construction could be divided into four operations, i.e., \texttt{Initialization}, \texttt{Search}, \texttt{Prune} and \texttt{Connect}. 
\texttt{Initialization} builds an initial PG $G_0$, which could be a KNNG with moderate accuracy as in NSG~\cite{Nsg, Dpg, Nssg, CAGRA} or a randomly generated KNNG as in Vamana~\cite{diskann}. 
Then, \texttt{Search} conducts $k$-ANNS (Algorithm~\ref{alg:knn_search}) for each $u \in D$ with $G_0$ in order to obtain refined $k$-ANN results, denoted as $C(u)$. 
In \texttt{Prune}, each $C(u)$ is pruned to generate $N_G(u)$. The most widely used pruning method is the RNG pruning, where $(u, v) \in E$ iff there exists no edge $(u, w) \in E$ such that $\delta(u, w) < \delta(u, v)$ and  $\delta(v, w) < \alpha \cdot \delta(u, v)$. 
$\alpha$ is set as 1 in NSG and HNSW, but is a configurable parameter in Vamana. 
To be specific, we present the RNG pruning in Algorithm~\ref{alg:prune}. 
The pruned neighbor set $PN$ for $C(u)$ that is sorted in the ascending order of the distance to $u$, is initialized as $\emptyset$ (line 1). Then, each neighbor $v \in C(u)$ is tried to insert into $PN$ (lines 2-9). $Flag$ is initialized as false (line 3), which indicates whether or not a neighbor in $PN$ dominates $v$. If $\delta(v, w) < \alpha \cdot \delta(u, v)$ (line 5), we say that $w \in PN$ dominates $v$, and set $Flag$ as true (line 6). $v$ will be inserted into $PN$ if all neighbors in $PN$ do not dominate $v$ (lines 7-8). Once $PN$ contains $M$ members, the pruning operation terminates (line 9). 
As to \texttt{Connect}, it aims to enhance the connectivity of the built PG by adding reversing edges or additional edges between connected components in the current PG. 

Unlike RNG, NSWG methods such as NSW~\cite{Sw} and HNSW~\cite{Hnsw} build the PG from scratch and insert each $u \in D$ into the current incomplete NSWG index one by one without the full knowledge of $D$. As depicted in Figure~\ref{fig:illus_nswg}, the insertion of each $u$ first conducts $k$-ANNS on the currently incomplete NSWG in \texttt{Search}. Next, they conduct pruning of the $k$-ANNS results to obtain the final neighbor set by the RNG pruning in HNSW~\cite{Hnsw} (Algorithm~\ref{alg:prune}) or directly removing some neighbors with larger distances in NSW~\cite{Sw}. Regarding \texttt{Connect}, NSWG adds reverse edges to enhance graph connectivity. 

\subsection{Problem Statement}
\label{ssec:prob}

In this paper, we focus on efficiently and effectively tuning the PG construction parameters of both NSWG and RNG methods for their optimal search performance.
For the sake of discussion, we take HNSW~\cite{Hnsw} as the representative NSWG method, and Vamana~\cite{diskann}, NSG~\cite{Nsg} as the representative RNG methods. Notably, HNSW and NSG are two SOTA in-memory solutions, while Vamana is the key component of DiskANN~\cite{diskann}, which is the SOTA I/O-efficient PG method.

Let $N(q)$ denote the set of $k$-ANNS results, and $N^{*}(q)$ the ground truth, $Recall@k = {|N(q) \cap N^{*}(q)|}/{k}$.
Formally, we give the problem statement as follows.
Given an integer $k$ of returned neighbors and a target $Recall@k$ value $t$, our objective is to tune the construction parameters of PG and the $k$-ANNS parameter $ef$ (as in Algorithm~\ref{alg:knn_search}) to achieve the $k$-ANNS performance of $Recall@k = t$. 
As aforementioned, we focus on three PGs in this work, i.e., HNSW, Vamana and NSG. 
\textbf{(i) HNSW} has two construction parameters, i.e., (1) $efc$ indicates the parameter $ef$ of $k$-ANNS for each $u \in D$ in \texttt{Search}, and (2) $M$ indicates the out-degree limit in \texttt{Prune}. 
\textbf{(ii) Vamana} requires four construction parameters, i.e., (1) $L$ indicates $ef$ in $k$-ANNS during \texttt{Search}, (2) $R$ is the number of returned neighbors in \texttt{Search}, (3) $M$ is the out-degree limit in \texttt{Prune}, and (4) $\alpha$ is a parameter in \texttt{Prune}, i.e., $\alpha$ of Algorithm~\ref{alg:prune}.
\textbf{(iii) NSG} needs four construction parameters, i.e., (1) $K$ indicates the out-degree of the initial graph in \texttt{Initialization}, (2-4) $L,R,M$ has the same meaning as in Vamana.
Notably, our method can be easily extended to other PGs, given their inherent similarity in construction~\cite{FastPG}.

\section{Existing Methods and Their Issues}

In this section, we first review the existing methods and then point out their issues.
In the literature, there exists a bulk of methods that can be used to tune the parameters for PG index construction. 
In general, existing methods contain two essential operations: (1) parameter recommendation (denoted as {Recom.}) and (2) parameter estimation (denoted as Est.). The former selects and recommends promising parameter candidates, while the latter estimates the quality of each recommended candidate. 

Existing works all focus on efficiently recommending parameter candidates via various methods, which could be divided into two categories: heuristic-based methods~\cite{liashchynskyi2019grid,Bergstra_Bengio_2012} and learning-based methods~\cite{van2017automatic,yang2024vdtuner,PGTuner}.  
%
As the widely-used heuristic-based method, Grid Search~\cite{liashchynskyi2019grid} exhaustively enumerates discrete parameter combinations, at the expense of high computational costs with numerous candidates. To enhance efficiency, Random Search~\cite{Bergstra_Bengio_2012} samples parameters randomly in the parameter space at the risk of missing the parameters of high quality. 
%
%
Learning-based methods construct predictive models to understand the relationships between parameters and performance metrics, and then use them to guide parameter recommendations. OtterTune \cite{van2017automatic} employs Gaussian Process Regression to identify optimal parameters. 
{PGTuner~\cite{PGTuner} leverages a pre-trained query performance model and a deep reinforcement learning based parameter recommendation model to produce high-quality candidates. 
VDTuner \cite{yang2024vdtuner}, as a state-of-the-art method, leverages multi-objective Bayesian optimization and employs Expected Hypervolume Improvement (EHVI) as the metric to find optimal configurations.
Notably, VDTuner and PGTuner are specially designed for the PG construction tuning, while others are general-purpose tuning methods. 
Moreover, VDTuner assumes no prior knowledge of the tuning task, whereas PGTuner relies on a pre-trained model to accelerate it.
In this paper, we follow the same assumption as VDTuner that no prior knowledge is required and tune the PG from scratch. } 

\vspace{1mm}
\stitle{Issues of Existing Methods.}
All existing works focus on recommending high-quality candidates but overlook the parameter estimation, which is the main contributor to the total tuning cost. 
To estimate the quality of each recommended parameter, existing methods directly build the corresponding PG and then test its $k$-ANNS performance. 
Notably, the cost of building a PG is superlinear to the number of vectors in $D$, which is pretty costly~\cite{FastPG}. 
As shown in Table~\ref{tab:time_breakdown}, at least 98.5\% of the total cost comes from parameter estimation.
Hence, it is vital to accelerate the parameter estimation in order to improve the overall tuning efficiency. 


\begin{table}[t]
\centering
\caption{The Cost Decomposition of Existing Methods on Sift.}
\label{tab:time_breakdown}
\begin{tabular}{|c|r|c|c|}
\hline
\textbf{Method} & \textbf{{Recom.}} & \textbf{{Est.}} & \textbf{Total} \\ \hline\hline
RandomSearch~\cite{Bergstra_Bengio_2012}  &     2s  & 40,570s~(100.00\%) & 40,572s      \\ \hline
VDTuner~\cite{yang2024vdtuner} & 438s  & 32,568s~~(98.67\%) & 33,006s \\ \hline
OtterTune~\cite{van2017automatic} & 161s & 39,730s~~(99.59\%) & 39,891s \\ \hline
FastPGT (ours) & 617s & 14,428s~~(95.89\%) & 15,045s \\ \hline
\end{tabular}
\end{table}

\eat{
{Sensitivity Analysis of Parameters on the Query Performance of Graph Indexes}

In the field of vector approximate search, proximity graph indices such as HNSW (Hierarchical Navigable Small World) and NSG (Navigable Small World Graph) have been widely adopted due to their efficient query performance and good scalability. However, the performance of these graph indices is not fixed but highly depends on the proper setting of a series of parameters.

For the HNSW index, parameters such as the number of connections of each node (M value) and the number of candidate nodes during the search phase (ef value) can significantly impact the accuracy and response time of queries. For example, if the M value is too small, the index graph will become sparse. When querying, it may require more jumps to find the approximate nearest neighbor, slowing down the query speed. Conversely, if the M value is too large, it will increase the storage space and construction time costs of the index. The ef value determines the exploration range of candidate nodes during the search phase. A higher ef value can improve the accuracy of query results, but it will also increase the time and computational resources required for each query. The NSG index also faces similar parameter sensitivity issues.Improper parameter configurations may lead to the index's inability to effectively capture the intrinsic structure of data in high-dimensional spaces. This can cause queries to fall into local optima, failing to obtain the true approximate nearest neighbor vectors, or result in long query times that fail to meet the requirements of real-time performance.

We have conducted an analysis of the current parameter tuning methods for Proximity Graph (PG) indexes. In the optimization process of PG index parameters, the initial step is based on data input. By integrating user demands for retrieval performance metrics (such as retrieval speed and $Recall@k$ rate) with data characteristics (such as data distribution and topological structure), the index type matching strategy is employed to screen out suitable index architectures. Subsequently, intelligent optimization algorithms, including Bayesian optimization and reinforcement learning, are utilized to systematically explore and optimize the index parameter space. Following this, based on the workload, performance validation is conducted on candidate parameter configurations, and the tuning model is dynamically updated according to the validation results. Through multiple rounds of the closed-loop optimization process involving “parameter optimization - performance validation - model iteration,” the optimal index parameter configuration that meets specific performance requirements is ultimately outputted.Our experiments demonstrate that the workload replay phase accounts for over 92\% of the total tuning overhead (see Table X). This is primarily attributed to the superlinear growth relationship between graph index construction costs and the number of vectors. Notably, all state-of-the-art methods including Rafael S. Oyamada’s meta-learning framework and Tiannuo Yang’s vdTuner solely focus on optimizing parameter recommendation models, while completely neglecting the optimization of the workload replay phase.

During the workload replay phase, constructing a proximity graph index using recommended parameters constitutes the main cost. Therefore, we conduct a detailed analysis of the proximity graph index construction process (as shown in \ref{fig:cost decomposition of PG}): The construction process of NSWG (such as HNSW) is decomposed into Search and Prune/Connect stages. The Search stage identifies candidate neighbor sets for nodes to be inserted based on the current graph structure, determining potential associated objects for subsequent connections. The Prune/Connect stage first prunes redundant candidate neighbors and then establishes reasonable and compact neighbor connections for nodes through Connect to improve the topology. The construction process of RNG (such as NSG), based on NSWG, additionally includes a Build KNNG step. It first constructs a K-nearest neighbor graph (KNNG) for the dataset to generate initial K-nearest neighbor connection relationships for each node, serving as the basis for subsequent Search, Prune, and Connect stages. Through experimental analysis of the time consumption in each stage, we identify that the time bottlenecks of index construction are concentrated in the Search and Prune stages. Subsequently, we will carry out optimizations for these two key links to improve overall construction efficiency.
}





\eat{
\begin{algorithm}[t]
\small  
    \SetFuncSty{textsf}
    \SetArgSty{textsf}
\caption{NSGBuilding($D,k_0, k, L, M$)}
\label{alg:build_nsg}
\Input{dataset $D$ and four parameters $k_0$, $k$, $L$ and $M$}
\Output{an NSG $G$}
\tcc{\textbf{Phase 1:} \texttt{Initialization}}
\State{build $G_{k_0}$ with $k_0$ neighbors by KGraph \cite{KGraph}}
\tcc{\textbf{Phase 2:} \texttt{Search}}
\State{$ep\leftarrow$ \texttt{KANNSearch} ($G_{k_0}, cn, k, L, rn$), where $cn$ is the centroid of $D$ and $rn \in D$ is a random node}
\For{{each $u \in D$ \underline{in parallel}}}
{
    \State{$C(u)\leftarrow$ \texttt{KANNSearch} ($G_{k_0}, u, k, L, ep$)}
}
\tcc{\textbf{Phase 3:} \texttt{Refinement} ($G =$\kw{Refine}$(\{C(u)|u\in D\}, M)$)}
\For{{each $u \in D$ \underline{in parallel}}}
{
{\State{$N_G(u)\leftarrow$ \kw{Prune}($u,C(u),M$)}}
}
\For{{each $u \in D$ \underline{in parallel}}}
{
    \State{$N_G(u)\leftarrow$ \kw{Prune}($u,N_G(u)\cup \{v|u\in N_G(v)\},M$)}
}
\State{find connected components via DFS}
\State{add extra edges into $E(G)$ between connected components}
\Return{$G$}
\end{algorithm}
}

\section{Our Solution}
\label{sec:sln}

In this section, we introduce our FastPGT method, focusing on expediting the parameter estimation operation, a key contributor to overall cost but is overlooked by existing methods.
To begin, in Section~\ref{ssec:paramter-r}, we analyze the parameter $R$ utilized in RNG construction and propose its removal from the parameter space to streamline parameter enumeration and enhance acceleration. Next, to leverage computing resources effectively, we advocate recommending multiple parameters and concurrently constructing multiple PGs with respect to various parameter candidates in Section~\ref{ssec:parallel_pr}. 
Furthermore, we examine the time breakdown of current PG-based algorithms, emphasizing the time-intensive aspects of \texttt{Search} and \texttt{Prune} to expedite parameter estimation, as detailed in Section~\ref{ssec:cost_analysis}. We then propose an efficient algorithm for batch parameter estimation that facilitates the construction of multiple PGs simultaneously in Sections~\ref{ssec:Shared Distance}, \ref{ssec:Shared Reusing Prune}, and \ref{ssec:m_pg}.

\begin{figure*}[t]
\centering
\includegraphics[width=0.8\linewidth]{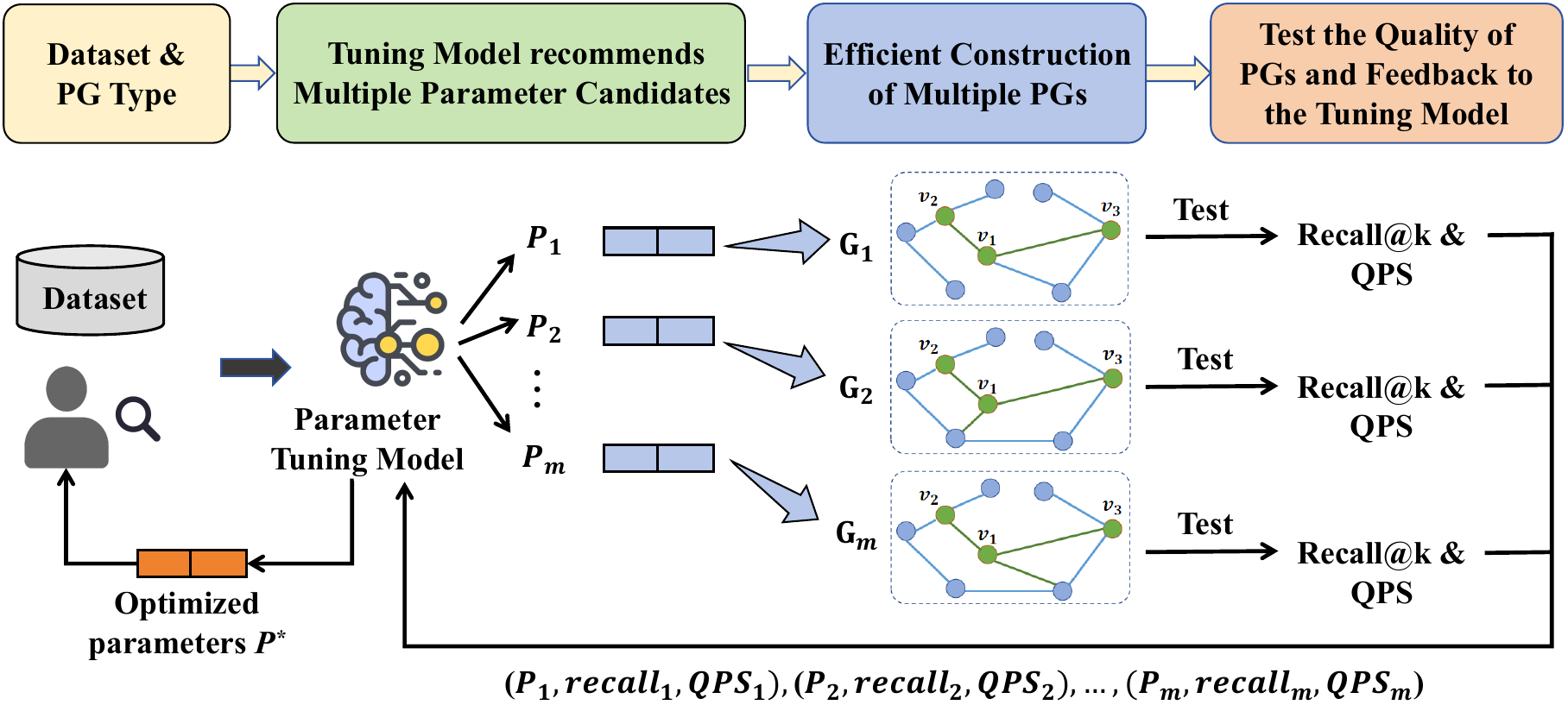}
\caption{Illustrating the framework of our method FastPGT.}
\label{fig:framework}
\end{figure*}

\vspace{1mm}
\stitle{Overview}: We depict the pipeline of FastPGT in Figure~\ref{fig:framework}, where iterative parameter tuning takes place. Initially, we utilize a parameter recommendation model to recommend multiple promising construction parameters.
To expedite the construction of multiple PGs in line with the recommended parameters, we exploit the common graph structures and the shared distance computations during index construction. After testing the constructed PGs, the tuning model will be further refined using the parameters and their corresponding performance metrics, and new parameters will be proposed for evaluation.

It is worth noting that our FastPGT is model-agnostic and can accelerate parameter estimation across various existing parameter recommendation methods. In this paper, we use VDTuner as the tuning model for parameter recommendations. Given that VDTuner lacks inherent support for recommending multiple parameters concurrently, we extend it to enable this capability.

\subsection{Parameter $R$ in RNG Construction}
\label{ssec:paramter-r}

In this part, we analyze the parameter $R$ within RNG construction, as utilized in both NSG and Vamana. This parameter controls the number of neighbors returned in the \texttt{Search} phase for subsequent \texttt{Prune}.
To elaborate, consider $pool_L=\{v_1,v_2,\cdots, v_L\}$ as the data point set obtained from $k$-ANNS (Algorithm~\ref{alg:knn_search}) when $ef=L$, with $q$ is a data point $u\in D$. Assuming $\delta(v_1,u)\le\delta(v_2,u)\le\cdots\le\delta(v_L,u)$, $R$ is employed to establish $C(u)=\{v_1,v_2,\cdots,v_R\}$, where $R\le L$.
Let $PN(R)$ denote the set $PN$ derived from the \texttt{Prune} process (Algorithm~\ref{alg:prune}) when $|C(u)|=R$ for a node $u$ and $C(u)$ is conducted with the same value of $L$ and same PG. The subsequent theorem demonstrates that the set $PN(R)$ is a subset of $PN(R')$ when $R\le R'$.

\begin{theorem}
Given a data point $u\in D$ and a set $\{v_1,v_2,\cdots,v_L\}$ with $\delta(v_1,u)\le\delta(v_2,u)\le\cdots\le\delta(v_L,u)$, let $PN(R)$ be the results produced by Algorithm~\ref{alg:prune} with $C(u)=\{v_1,\cdots,v_R\}$.
For the same $M$ and $\alpha$ values, it follows that $PN(R)\subseteq PN(R')$ when $R\le R'\le L$.
\end{theorem}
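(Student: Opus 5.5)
The key observation is that Algorithm~\ref{alg:prune} is a greedy, prefix-determined procedure: whether a candidate $v_j$ is accepted depends only on $\alpha$, $u$, $v_j$, and the current contents of $PN$. The contents of $PN$ at that moment are determined solely by $v_1,\dots,v_{j-1}$. Since both $C(u)=\{v_1,\dots,v_R\}$ and $C(u)=\{v_1,\dots,v_{R'}\}$ are scanned in the same ascending order $v_1,v_2,\dots$ (ties are resolved identically, since the input order is fixed), the run on the longer list replays the run on the shorter list step by step. I would prove the theorem by making this replay precise.

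Concretely, let $PN_j$ denote the state of $PN$ after processing $v_1,\dots,v_j$ in Algorithm~\ref{alg:prune}, ignoring for the moment the length of the input list. I would show by induction on $j$ that, for every $j\le R$, the runs with inputs of length $R$ and $R'$ have identical states $PN_j$ and either both have terminated or both are still running.

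The base case is $PN_0=\emptyset$ for both runs. For the inductive step, suppose neither run has broken out before step $j\le R$ and both have the same $PN_{j-1}$. Then the domination test for $v_j$ (lines 3--6) compares $\alpha\cdot\delta(v_j,w)$ with $\delta(u,v_j)$ for the same set of $w$, so the two runs make the same decision and obtain the same $PN_j$. The break condition $|PN_j|\ge M$ at line 9 depends only on $PN_j$ and the common $M$, so it also triggers simultaneously in both runs.

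Two cases remain. If the break fires at some step $j\le R$, both runs return the same set, so $PN(R)=PN(R')$. Otherwise, the run on $R$ items ends after step $R$ with output $PN_R$. The run on $R'$ items continues from that same state, and since line 8 only ever adds elements and no line removes any, its final output satisfies $PN(R')\supseteq PN_R=PN(R)$. Either case gives $PN(R)\subseteq PN(R')$.

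I do not expect a real obstacle here. The only point needing care is to state explicitly that the iteration order is the same fixed order $v_1,\dots,v_L$ (including ties) and that line 9's early termination is handled. This is exactly why the equality case must be split off separately rather than assuming the shorter run always consumes all $R$ items.
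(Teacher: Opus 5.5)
Your proof is correct and follows the same route as the paper: split on whether the early break at line 9 fires within the first $R$ candidates, which gives $PN(R)=PN(R')$, or not, in which case the longer run replays the shorter one on $v_1,\dots,v_R$ and $PN$ only grows afterwards. Your induction on the prefix state makes rigorous what the paper only sketches, since its written argument shows only that rejected candidates stay rejected and leaves the needed direction (accepted candidates stay accepted) to the remark about ``the same elements in the same order''; you are also right to flag the fixed tie-breaking order as an assumption the statement needs.
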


\begin{proof}
If $|PN(R)|\ge M$, the algorithm returns $PN$ upon termination at line 9. Hence, elements $v\in C(u)$ with greater distances to $u$ are not enumerated. This results in $PN(R)=PN(R')$ for $R'\ge R$.
Otherwise, if we let $PN(R)={v_{p_1},v_{p_2},\cdots,v_{p_c}}$, where $c< M$ and $p_1<p_2<\cdots<p_c$, for $i<p_c$ such that $v_i\not\in PN(R)$, it follows that $v_i\not\in PN(R')$ as well. This is because both $PN(R)$ and $PN(R')$ operate on the same elements in the same order for pruning nodes, if a node $v_j\in PN(R)$ satisfies $\alpha \cdot \delta(v_i,v_j)<\delta(v_i,u)$, it will also exclude $v_i$ from $PN(R')$.
\end{proof}

Hence, from the above theorem, it is evident that increasing the value of $R$ initially results in a sufficient neighbor list, i.e., the $|PN|$ increases and finally reaches $M$, and then will not affect $PN$. Additionally, since the efficiency of Algorithm~\ref{alg:prune} remains unaffected by the value of $R$, courtesy of the early break at line 9, the traversal of more candidate nodes in $C(u)$ is avoided when the neighbor size is $M$. Therefore, setting the value of $R$ as $L$ during index construction is a direct approach, rendering fine-tuning unnecessary, as opting for $R=L$ proves to be the optimal choice.

\subsection{Multiple Parameter Recommendations}
\label{ssec:parallel_pr}


To maximize the utilization of computing resources, we aim for the recommendation model to generate multiple promising parameters. This approach enables the simultaneous generation of multiple PGs, facilitating the assessment of their performance for subsequent refinement of the recommendation model and parameter recommendations.

In this part, we consider extending the SOTA method VDTuner~\cite{yang2024vdtuner} to yield multiple parameters in each iteration.
VDTuner utilizes Multi-Objective Bayesian Optimization (MOBO), which is designed to balance the tradeoff between two key metrics: QPS and $Recall@k$ in $k$-ANNS performance.

VDTuner consists of two primary components: a surrogate model and an acquisition function. The surrogate model maps each construction parameter to the $k$-ANNS performance of the PG built accordingly, while the acquisition function utilizes the surrogate model's predictions to balance ``exploration'' and ``exploitation'' in the next recommendation step. Specifically, VDTuner employs a Gaussian process (GP) as the surrogate model to capture the intricate relationships between the PG parameters, i.e., $M,efc,ef$ of NSWG and $L,M,\alpha,ef$ of RNG, and the performance metrics, i.e., QPS and $Recall@k$.
To stabilize the GP training when performance values vary significantly, the model is trained on normalized performance indicators. For a parameter $P_i$ with performance metrics $(y_{i}^{\text{qps}},y_{i}^{\text{recall}})$, where $y_{i}^{\text{qps}}$ and $y_{i}^{\text{recall}}$ represents its QPS and $Recall@k$ respectively, its normalized performance $(\hat{y}_{i}^{\text{qps}},\hat{y}_{i}^{\text{recall}})$ is defined as
\begin{equation}
\label{eq:pmetric}
(\hat{y}_{i}^{\text{qps}},\hat{y}_{i}^{\text{recall}})=(\frac{y_{i}^{\text{qps}}}{\overline{y}^{\text{qps}}},\frac{y_{i}^{\text{recall}}}{\overline{y}^{\text{recall}}}),
\end{equation}
where $(\overline{y}^{\text{qps}},\overline{y}^{\text{recall}})$ is calculated over the set of all currently found non-dominated parameters representing the most balanced solution found so far, i.e., $$\arg\max_{(y^{\text{qps}},y^{\text{recall}})\in\mathcal{Y}}\frac{1}{|y^{\text{qps}}/y_{\max}^{\text{qps}}-y^{\text{recall}}/y_{\max}^{\text{recall}}|},$$
%
%
where $y_{\max}^{\text{qps}}$ and $y_{\max}^{\text{recall}}$ are the maximum QPS and $Recall@k$ found within the non-dominated set $\mathcal{Y}$. 

VDTuner employs Expected Hypervolume (HV) Improvement (EHVI) as its acquisition function~\cite{Yang_Emmerich_Deutz_Bäck_2019}. HV measures the volume dominated by a set of non-dominated solutions, i.e., the Pareto front, with a lower bound defined by a preset reference point $r$. The EHVI acquisition function directs the search by identifying the single candidate point that maximizes the EHVI.
However, the standard EHVI is primarily designed for \textit{sequential} optimization, recommending a single best candidate per iteration~\cite{Yang_Emmerich_Deutz_Bäck_2019}. Moreover, it is non-trivial to extend it for batch recommendations, since no analytical formula exists for computing the joint expected EHVI of multiple candidates. To address this issue, we propose a heuristic method, called $m$EHVI, that enables batch recommendations, where $m$ is the number of parameters recommended in each batch. 
%
The $m$EHVI computes the exact joint hypervolume improvement (HVIs) from $m$ candidates to model their collective effect. Given a set $\mathcal{Y}$ of already evaluated non-dominated points and a reference point $r$, the $m$EHVI for a candidate set $P = \{P_1, P_2, \ldots, P_m\}$ is defined as:
\begin{align}
\alpha_{m\text{EHVI}}\left(\left\{P_{i}\right\}_{i=1}^{m}\right)    =  \mathbb{E}\left[ \text{HVI}(\{f(P_i)\}_{i=1}^m, \mathcal{Y}, r) \right]  \nonumber \\
\label{eq:qehvi}
 = \mathbb{E}\left[ HV(r, \mathcal{Y} \cup \{f(P_i)\}_{i=1}^m) - HV(r, \mathcal{Y}) \right],
\end{align}
where $f(P_i)$ denotes the normalized performance of the new candidate $P_i$ predicted by the surrogate model. 
The HV function is utilized to calculate the HV of the observed parameter-performance pairs. Thus, $\alpha_{m\text{EHVI}}$ quantifies the joined EHVI after adding the $m$ candidates.

\subsection{Parameter Estimation Analysis}
\label{ssec:cost_analysis}
\begin{figure}[t]
\centering
\subfloat[Sift]{\includegraphics[width=0.47\linewidth]{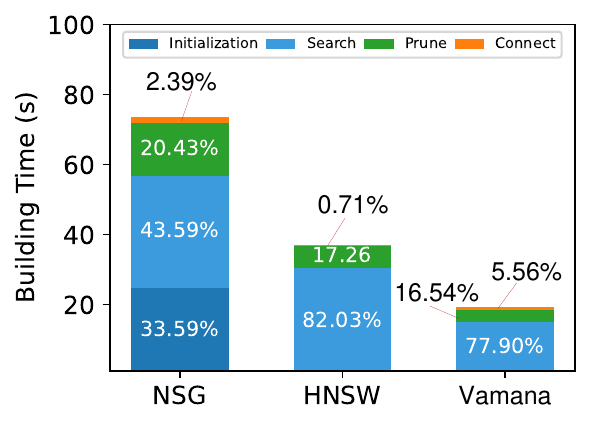}}
\hspace{0.1cm}
\subfloat[Gist]{\includegraphics[width=0.47\linewidth]{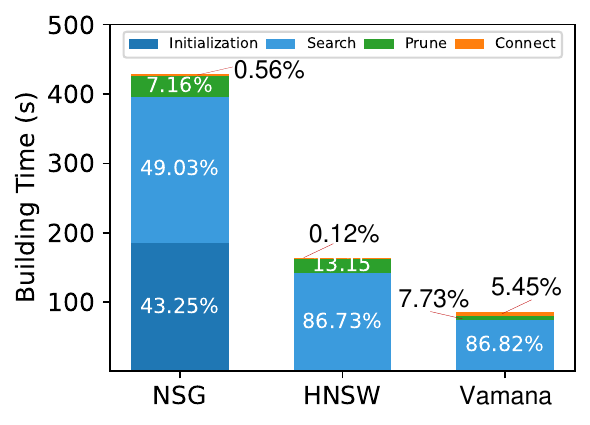}}
\caption{The cost decomposition of PG construction.}
\label{fig:cost decomposition of PG}
\end{figure}

In this part, we first analyze the construction cost of the three representative PGs: HNSW~\cite{hnswlib}, NSG~\cite{Nsg}, and Vamana~\cite{diskann}. 
As shown in Figure~\ref{fig:cost decomposition of PG}, we can find that \texttt{Search} contributes the most to the overall construction cost in all three PGs. In particular, \texttt{Search} occupies 86.7\%, 86.8\%, and 49.0\% of HNSW, Vamana, and NSG on Gist, respectively. 
Moreover, \texttt{Prune} constitutes a significant portion of the total cost.
Therefore, it is crucial to improve parameter estimation efficiency, accelerating \texttt{Search} and \texttt{Prune}.

Next, we analyze the graph structures of PGs under various parameters and demonstrate significant overlaps among graph structures, i.e., their neighbor lists. We first consider the parameter $M$ used in all three representative PGs. The following theorem demonstrates that the neighbor list for a smaller value of $M$ is a subset of that for a larger value of $M$.

\begin{theorem}
Given a data point $u\in D$ and a set $C(u)$, let $PN(M)$ be the results produced by Algorithm~\ref{alg:prune} with input $M$.
For the same $\alpha$ value, it follows that $PN(M)\subseteq PN(M')$ when $M\le M'\le |C(u)|$.
\end{theorem}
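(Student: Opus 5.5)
Algorithm~\ref{alg:prune} is deterministic once $u$, $C(u)$ and $\alpha$ are fixed, so the plan is to compare the two runs (with $M$ and with $M'$) step by step and show they coincide for as long as the run with $M$ lasts. Fix the ascending order $v_1,\dots,v_{|C(u)|}$ of $C(u)$ by $\delta(u,\cdot)$, with ties broken the same way in both runs. For $j=0,1,\dots$ let $PN_j(M)$ denote the content of $PN$ after the $j$-th iteration of the outer loop (lines 2--9) in the run with input $M$, and define $PN_j(M')$ similarly.

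\textbf{Step 1 (invariant).} Let $T$ be the number of outer iterations performed by the run with $M$. I will prove by induction on $j\le T$ that $PN_j(M)=PN_j(M')$ and that the run with $M'$ also reaches iteration $j$.
\begin{itemize}
\item The base case $j=0$ is immediate, since both sets are empty.
\item For the inductive step, suppose $PN_{j-1}(M)=PN_{j-1}(M')$ and the run with $M$ executes iteration $j$. The domination test (lines 3--6) on $v_j$ depends only on the current set $PN$, on $\alpha$, and on distances. It therefore yields the same $Flag$ in both runs, so $v_j$ is inserted in both or in neither. Hence $PN_j(M)=PN_j(M')$.
\item It remains to check that the run with $M'$ did not break before reaching iteration $j$. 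The run with $M$ did not break after iteration $j-1$, so $|PN_{j-1}(M)|<M\le M'$. Thus $|PN_{j-1}(M')|<M'$, and the break condition at line 9 was not triggered in the run with $M'$ either.
\end{itemize}

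\textbf{Step 2 (conclusion).} By Step 1, $PN(M)=PN_T(M)=PN_T(M')$. The set $PN$ only grows during a run: elements are added at line 8 and never removed. So $PN_T(M')\subseteq PN(M')$, which gives $PN(M)\subseteq PN(M')$. Two termination cases should be stated explicitly:
\begin{itemize}
\item If the run with $M$ stops by the break at line 9 (i.e.\ $|PN(M)|=M$), the run with $M'$ simply continues from the identical state and may add further elements.
\item If the run with $M$ exhausts $C(u)$, then $T=|C(u)|$ and the two outputs are equal.
\end{itemize}
The hypothesis $M'\le|C(u)|$ plays no essential role; the argument works for any $M\le M'$.

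\textbf{Main obstacle.} The only delicate point is the break condition. One must make sure that a larger degree limit never causes an \emph{earlier} termination, and never changes which candidates are tested before the run with $M$ stops. The monotonicity $|PN_{j-1}|<M\le M'$ handles this. The argument is essentially the same prefix-simulation idea as in the preceding theorem on $R$, with the roles reversed: there the candidate list is extended, here the stopping threshold is raised.
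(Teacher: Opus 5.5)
Your proof is correct and follows the same route as the paper. The paper also splits on whether the run with $M$ exhausts $C(u)$, which gives $PN(M)=PN(M')$, or stops at the break on line 9, after which the run with $M'$ can only add further elements. Your induction, which uses $|PN_{j-1}|<M\le M'$ to rule out an earlier break in the larger run, makes rigorous the step that the paper leaves implicit.
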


\begin{proof}
If $|PN(M)| < M$, all elements $v\in C(u)$ have been traversed. Thus, the value of $PN$ is independent of $M$, leading to $PN(M)=PN(M')$. 
If $|PN(M)|\ge M$, the algorithm returns $PN$ at line 9 upon termination. Hence, elements $v\in C(u)$ with greater distances to $u$ are not enumerated for $M$, but they may be included with a larger value of $M$ and could be contained in $PN(M')$. This implies $PN(M)\subseteq PN(M')$ for $M'\ge M$.
\end{proof}

Next, we analyze the size $|C(u)|$ of the candidate neighbor set for \texttt{Prune}, i.e., $efc$ in HNSW and $L$ in NSG and Vamana, as well as $\alpha$. We analyze these parameters in an experimental study to demonstrate their effects on the overlaps between the graph structures. Here, we take Vamana as an example. For two PGs $G_1$ and $G_2$ with different parameters, the overlap of two neighbor lists on $u$, i.e., $N_{G_1}(u)$ and $N_{G_2}(u)$, is defined as $NLO(N_{G_1}(u), N_{G_2}(u)) = |N_{G_1}(u) \cap N_{G_2}(u)| / |N_{G_1}(u)|$. Hence, the overlap between $G_1$ and $G_2$ is defined as $NLO(G_1, G_2) = 1/|D| \cdot \sum_{u\in D} NLO(N_{G_1}(u), N_{G_2}(u))$. We show the results in Figure~\ref{fig:overlap_analysis}, where we fix $M=50$. In Figure~\ref {fig:overlap_analysis-a}, we fix $\alpha=1.2$ and present the $NLO$ value between two graphs with different $L$ values. Obviously, the closer $L$ values are, the larger the $NLO$ is, i.e., the more similar their graph structures are. In Figure~\ref{fig:overlap_analysis-b}, we fix $L = 100$ and vary $\alpha$. Similarly, the closer $\alpha$ values lead to similar graph structures built accordingly. Besides, similar phenomena could be observed in other PGs and other datasets, which are omitted due to the space limit. 

\begin{figure}[t]
\centering
\subfloat[Overlap for $L$]{\includegraphics[width=0.46\linewidth]{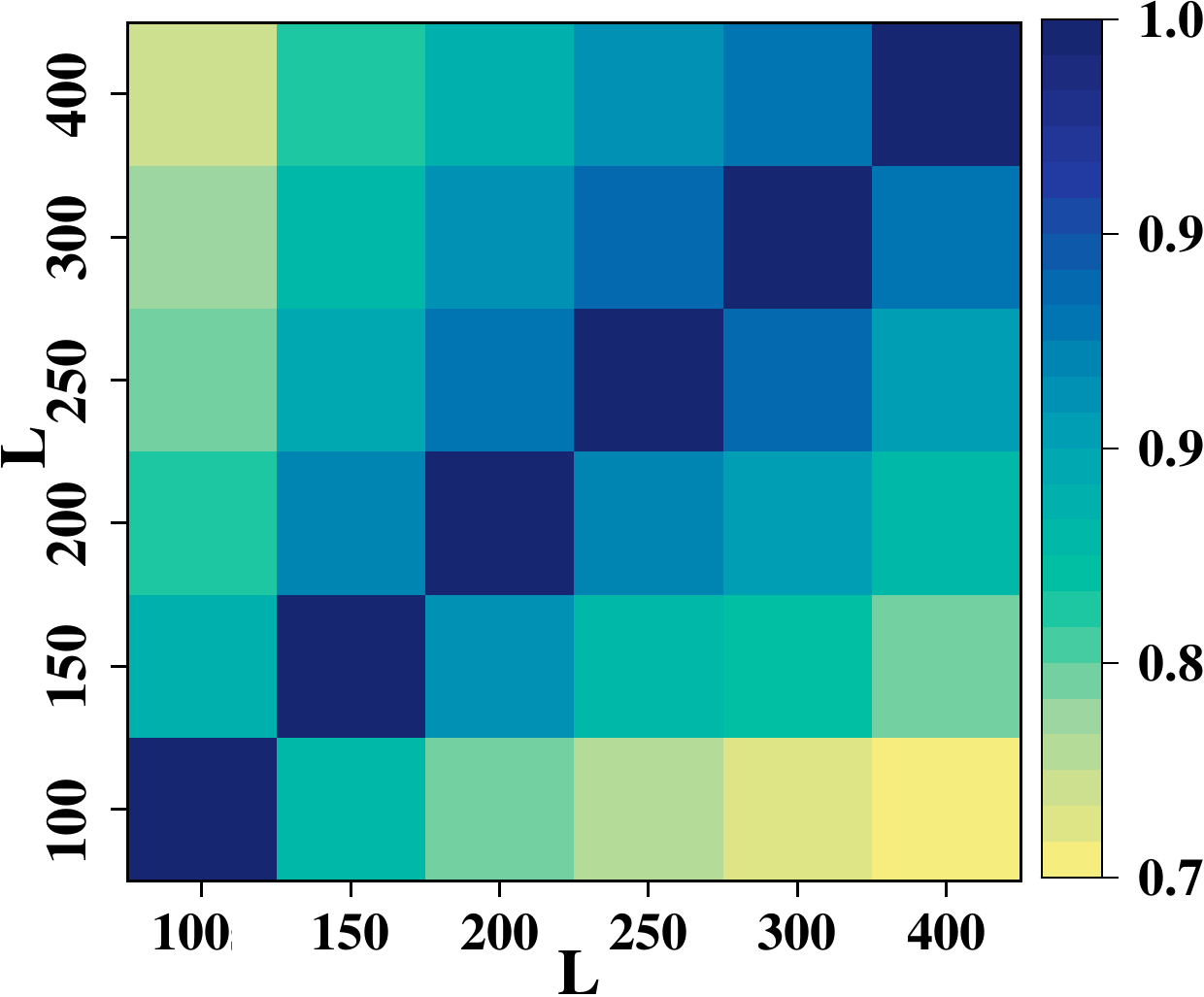}
\label{fig:overlap_analysis-a}
}
\hspace{0.1cm}
\subfloat[Overlap for $\alpha$]{\includegraphics[width=0.46\linewidth]{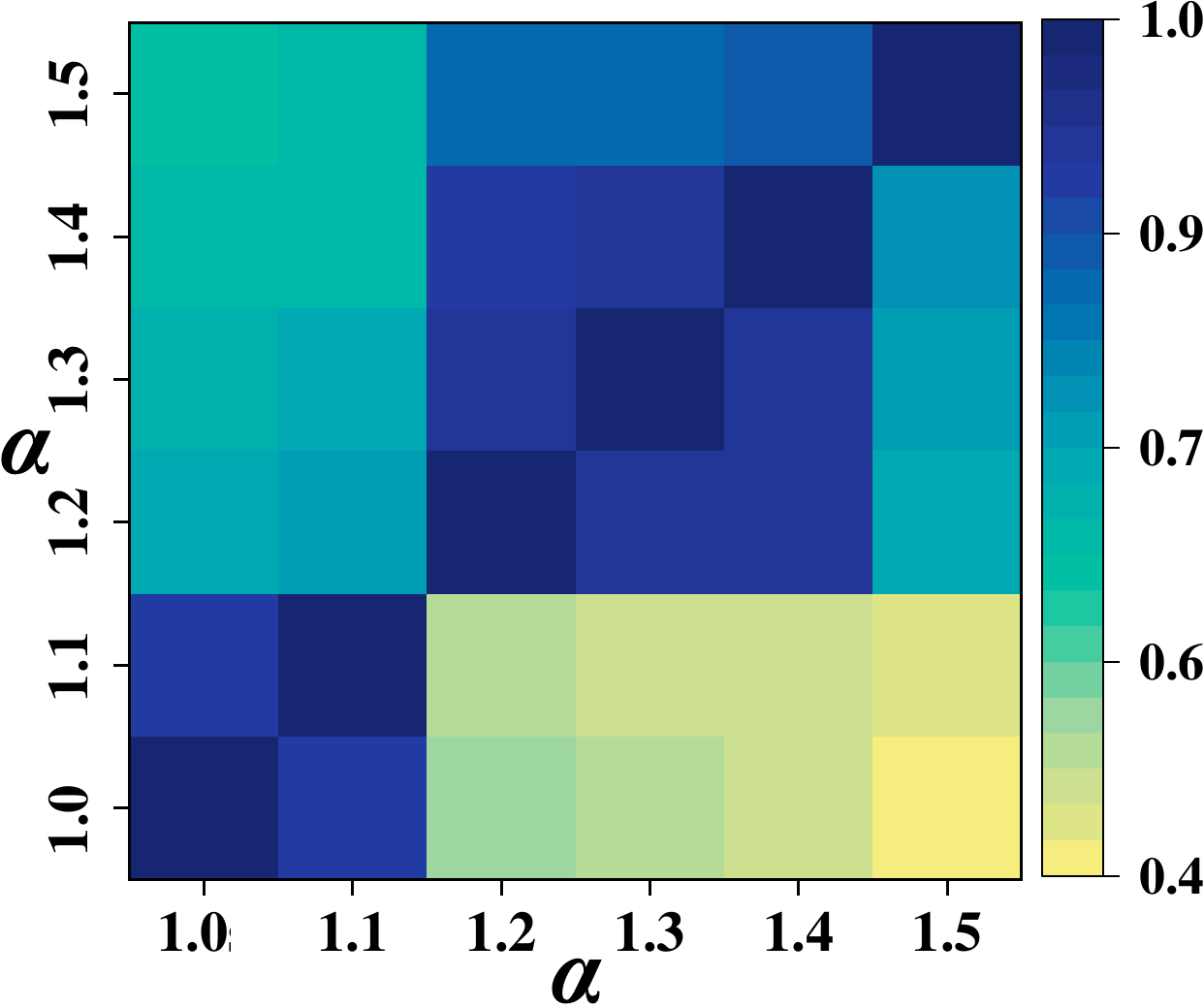}
\label{fig:overlap_analysis-b}
}
\caption{The effects of $L$ and $\alpha$ on the neighbor list overlap between the Vamana graphs built accordingly on Sift. }
\label{fig:overlap_analysis}
\end{figure}

As discussed above, we observe significant overlaps in the graph structures of the PGs built from similar parameters. Hence, we explore the efficient simultaneous construction of multiple PGs to leverage these overlaps. This approach enables us to maximize the utilization of the common graph structures and similar computations during construction.

To further enhance the overlap among multiple PGs, we employ a deterministic random strategy for HNSW layer determination and KNNG initialization in NSG and Vamana.

\vspace{1mm}
\stitle{Deterministic Random Strategy}: 
The key concept behind deterministic randomness is to leverage pseudo-random number generators to deterministically produce the same number sequence with a consistent seed value. In detail, for each node in HNSW, we ascertain its layer deterministically prior to subsequent index construction. This ensures uniformity across multiple PGs in terms of node layers. 
In NSG and Vamana, to build the initial KNNG $G_0$, we deterministically establish the initial neighbor list for each node. These neighbor lists are then utilized to initialize KNNG across all concurrently constructed PGs during parameter estimation.
To reduce memory usage, we opt not to store neighbor lists and layer information explicitly. Instead, we generate them using a pseudo-random number generator with a consistent seed value set across all instances.

\subsection{Efficient \texttt{Search} Operations for Multiple PGs}
\label{ssec:Shared Distance}


In this part, we aim to accelerate the \texttt{Search} operations when building multiple PGs simultaneously. Notably, the \texttt{Search} operation independently conducts $k$-ANNS on the current PG (e.g., the current HNSW, the initially built KNNG, or the current Vamana) for each $u \in D$. Hence, we consider conducting the $k$-ANNS for $u \in D$ on the multiple PGs simultaneously, which does not affect the results of the \texttt{Search} operations. 
As follows, we first analyze the shared computations among a node $u\in D$ in multiple PGs.

\vspace{1mm}
\stitle{Repeated Computations in \texttt{Search} Operations:}
As shared graph structures and other overlaps lead to repeated distance computations, which constitute a significant cost in $k$-ANNS, as shown in prior studies~\cite{FastPG}, our focus here is on the distance computations across multiple PGs.
There exist numerous distance computations when building multiple PGs, especially in the \texttt{Search} operations. 
Intuitively, this is because the distances of $u \in D$  to its close points will be computed more than once when conducting $k$-ANNS for $u$ on multiple PGs. Moreover, the more similar their construction parameters are, the more repeated distance computations. 
The more \texttt{Search} operations conducted simultaneously, the more repeated distance computations are. 


\begin{table}[t]
\setlength{\abovecaptionskip}{0cm}
\caption{Illustrating repeated computations on Sift and Glove under different parameter settings for HNSW.}
\label{tb:repeation}
\begin{center}
  \resizebox{0.49\textwidth}{!}{
\begin{tabular}{|c|c|r|c|c|c|}
\hline
\textbf{Datasets}&
\text{$(efc,M)$} & {$\#dist$ ($\times 10^9$)} & \text{$ratio_{rp}$} & \text{$ratio_{rp}^{s}$}&
\text{$ratio_{rp}^{p}$} \\
\hline \hline
\multirow{7}{*}{Sift}&\text{A:(300,18)}& 5   &   &&  \\ 
&\text{B:(300,20)}& 5.3 &  54\% & 60\% & 33\%\\ 
&\text{C:(300,22)}& 5.5 &   &  &\\  
\cline{2-6}
&\text{A:(400,28)}& 7.7 &   & &\\
&\text{B:(400,30)}& 7.9 &  56\% & 61\% & 36\% \\
&\text{C:(400,32)}& 8.1 &   & & \\
\hline
\multirow{7}{*}{Glove}
&\text{A:(300,18)}& 7.5   &   &  &\\ 
&\text{B:(300,20)}& 7.9 &  55\% & 60\% & 35\%\\ 
&\text{C:(300,22)}& 8.3 &   &  &\\  
\cline{2-6}
&\text{A:(400,28)}& 12.1 &   & &\\
&\text{B:(400,30)}& 12.5 &  58\% & 63\% & 38\% \\
&\text{C:(400,32)}& 12.9 &   &  &\\
\hline
\end{tabular}
}
\end{center}
\end{table}

To illustrate this phenomenon, we take HNSW as an example and conduct experiments in the real-world datasets, whose results are shown in Table \ref{tb:repeation}. Here, we build three HNSW graphs with three construction parameters and count the repeated distance computations in the construction of those HNSW graphs.
Let $dist_A$, $dist_B$, and $dist_C$ be the set of vector pairs for distance computations under the parameter settings $A$, $B$, and $C$, respectively. We use $\#dist$ to denote the total distance computations, and $ratio_{rp}$ to denote the ratio of common distance computations among all three settings, i.e.,
$ratio_{rp} = |dist_A \cap dist_B \cap dist_C| / (|dist_A| + |dist_B| + |dist_C|)$. 
Let $dist_A^{s}$, $dist_B^{s}$, and $dist_C^{s}$ be the set of vector pairs for distance computations caused by \texttt{Search} operations. $ratio_{rp}^{s} = |dist_A^{s} \cap dist_B^{s} \cap dist_C^{s}| / (|dist_A^{s}| + |dist_B^{s}| + |dist_C^{s}|)$. 
The results show that more than half of the distance computations are repeated under the different parameter settings. 
Regarding \texttt{Search} operations, they share a substantial portion (i.e., at least 60\%) of redundant distance computations, excluding those that could be prevented by utilizing a bitmap $visited$ in Algorithm \ref{alg:knn_search}, where each element in $visit$ indicates whether or not it has been verified in this $k$-ANNS operation.

\begin{algorithm}[t]
\caption{mKANNS($G, q, k, ef, ep,V_{\delta}$)}
\label{alg:mkanns}
\Input{A PG $G$, query $q$, $k$ for $k$-ANN, the entering point $ep$, a parameter $ef$ for pool size, and $V_{\delta}$ for record distance computations}
\Output{$k$-ANN of query $u$}
$i \gets 0$\;
$pool[0] \gets (ep, \text{dist}(u, ep))$\;
\While{$i < ef$} {
    $u \gets pool[i]$\;
    \For{each $v \in N_G(u)$} {
        \If{{$V_{\delta}[v] != -1$}} {
           { fetch $\delta(u, v)$ from $V_{\delta}[v]$};
        }
        \Else{
            {compute $\delta(u, v)$ and store it in $V_{\delta}[v]$};
        }
        insert $(v, \delta(q, v))$ into $pool$;
    }
    sort $pool$ and keep the $ef$ closest neighbors\;
    $i \gets$ index of the first unexpanded point in $pool$\;
}
\Return{$pool[0, \ldots, k-1]$}
\end{algorithm}

\vspace{1mm}
\stitle{Fast Multiple \texttt{Search} Operations:}
Suppose that we build $m$ PGs $\{G_1, G_2, \ldots, G_m\}$ simultaneously w.r.t. a set of parameters $P = \{P_1, P_2, \ldots, P_m\}$. For each node $u \in D$ and each parameter $P_i$ ($1 \leq i \leq m$), we need to conduct a $k$-ANNS on a specific graph index. To avoid the repeated distance computations during those $m$ $k$-ANNS operations for $u$, we cache all the distances computed so far. Let $V_{\delta}$ with $n$ elements be the set of cached distances, where each element is initialized as -1. Once $\delta(u, v)$ is computed, $V_{\delta}[v] = \delta(u, v)$.

We show the $k$-ANNS operation with $V_{\delta}$ in Algorithm \ref{alg:mkanns},
which is similar to Algorithm \ref{alg:knn_search} but differs in the distance computations. Before each distance $\delta(u, v)$, Algorithm \ref{alg:mkanns} checks $V_{\delta}[v]$  first to determine whether it has been cached (line 6). If $\delta(u, v)$ has been computed, just fetch it from $V_{\delta}$ (line 7). Otherwise, directly compute it and update $V_{\delta}[v]$. In this way, our method avoids repeated distance computations across multiple $k$-ANNS operations on the same $u$. 

\subsection{Efficient \texttt{Prune} Operations for Multiple PGs}
\label{ssec:Shared Reusing Prune}

In this part, let us consider $m$ $k$-ANNS for the same node $u \in D$ on $m$ distinct PGs, as discussed in the last part. Let $C_1(u), C_2(u), \ldots, C_m(u)$ be the corresponding $m$ sets of $k$-ANNS results. Then, we need to prune each $C_i(u)$ ($1 \leq i \leq m$) in each \texttt{Pune} operation to get the pruned neighbor set $C_i^{'}(u)$. 
Notably, as in Figure~\ref{fig:cost decomposition of PG}, \texttt{Prune} operations make a significant contribution to the overall construction cost. Hence, accelerating \texttt{Prune} operations enhances overall construction efficiency. 
Like multiple \texttt{Search} operations on the same $u$, we observe the repeated computations among those $m$ \texttt{Prune} operations. 

\begin{figure}[t]
\centering
\includegraphics[width=1\linewidth]{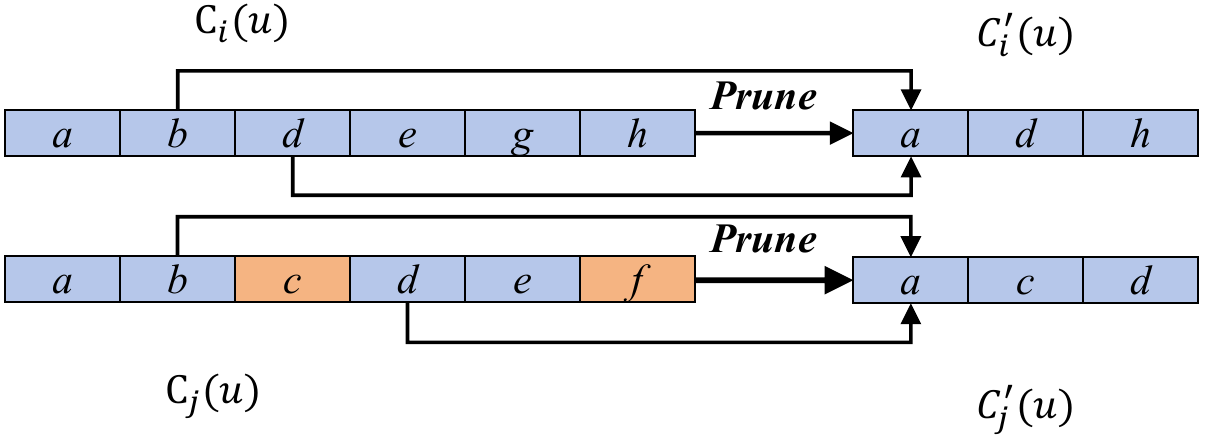}
\caption{Illustrating the repeated computations among two \texttt{Prune} operations on the same node $u$ under different parameters. Here, $\delta(a, d)$ is computed in both \texttt{Prune} operations.}
\label{fig:repeation_in_prune}
\end{figure}

\vspace{1mm}
\stitle{Repeated Computations in Multiple \texttt{Prune} Operations.}
Consider $C_i(u), C_j(u)$, where $1 \leq i \neq j \leq m$. Since they both contain the close neighbors of $u$, it is easy to derive that $C_i(u) \cap C_j(u) \neq \emptyset$. Moreover, the more similar the corresponding construction parameters $P_i$ and $P_j$ are, the more similar $C_i(u)$ and $C_j(u)$ are. 
As illustrated in Figure~\ref{fig:repeation_in_prune}, $C_i(u) = \{a, b, d, e, g, h\}$ and $C_j(u) = \{a, b, c, d, e, f\}$. For simplicity, all neighbors are sorted in the increasing order of their distance to $u$, e.g., $\delta(u, a) \leq \delta(u, b) \leq \dots \leq \delta(u, h)$ in $C_i(u)$. 
After $a$ joins both $C_i(u)$ and $C_j(u)$, $\delta(a, d)$ will be computed by both \texttt{Prune} operations to check is $d$ is dominated by $a$. 
Here, $\delta(a, d)$ is computed repeatedly. 


A naive method to address this issue is to use a hash table, where the vector ID pair $(id_1, id_2)$ of the distance serves as the key, and the distance value is set as the actual distance for its first verification. Then, before each distance computation in each pruning, we first check whether or not it has been in the hash table. However, it is practically inefficient because the size of such a hash table is up to $O(|\cup_{i=1}^{m} C_i(u)|^2)$, where $|C_i(u)|$ could be several hundreds. As a result, the maintenance and queries of such a hash table incur significantly additional cost, including memory allocation for newly inserted elements and the computation of hash keys.

\begin{algorithm}[t]
\small  
\SetFuncSty{textsf}
\SetArgSty{textsf}
\caption{mPrune($u,C_i(u),M, \alpha$)}
\label{alg:mprune}
\Input{a vertex $u$, its candidate neighbor set $C_i(u)$, out-degree limit $M$ and a parameter $\alpha$}
\Output{a pruned neighbor set $C_i^{'}(u)$}
add the first neighbor $u^*$ of $C_i(u)$ to $C_i^{'}(u)$\;
\For{each $v \in C_i(u) \setminus \{u^*\}$}
{
    $Flag\leftarrow$ false\;
    \For{each $w \in C_i^{'}(u)$}
    {
        \If{$v, w \in C_{i-1}^{'}(u)$} {
            \textbf{continue}\;
        }
        \If{$\alpha \cdot \delta(v,w)<\delta(u,v)$}
        {
            $Flag\leftarrow$ true\;
        }
    }
    \If{$Flag=$ false} {
        $C_i^{'}(u)\leftarrow C_i^{'}(u) \cup \{v\}$\;
    }
    \textbf{if} $|C_i^{'}(u)|\ge M$ \textbf{then break}\;
}
\Return{$C_i^{'}(u)$}\;
\end{algorithm}

\vspace{1mm}
\stitle{Fast Pruning for Multiple Graphs.} In this work, we reduce the repeated distance computations between two consecutive pruning operations via only slightly extra storage. We subsequently prune $C_1(u), C_2(u), \ldots, C_m(u)$. 
As in Figure~\ref{fig:repeation_in_prune}, consider that we just obtain $C_i^{'}(u)$ by pruning $C_i(u)$ and then prune $C_{j}(u)$, where $j=i+1$. During the pruning, we compute $\delta(a, d)$,  to check whether or not $d$ is dominated by $a$. However, with $a \in C_i(u) \cap C_{i+1}(u)$ and $d \in C_i^{'}(u)$, we have that $\delta(a, d)$ has been computed before and $a$ does not dominate $d$. In this way, we can avoid computing $\delta(a, d)$ again. Note that the more similar $C_i(u)$ and $C_{i+1}(u)$ are, the more repeated distance computations can be saved. 

We show the details in Algorithm~\ref{alg:mprune}. Note that pruning on $C_1(u)$ is unchanged as in Algorithm~\ref{alg:prune}. However, starting from pruning on $C_2(u)$, we employ our method in Algorithm~\ref{alg:mprune}. Since $C_i(u)$ is sorted in the ascending order of the distance $u$, the first neighbor, i.e., the closest one to $u$, is added into $C_i^{'}(u)$ (line 1). Then, we check whether or not each neighbor $v \in C_i(u)$ is dominated by existing ones in $C_i^{'}(u)$ (lines 2-11). Different from the original pruning method, we avoid computing $\delta(v, w)$ if $v, w \in C_{i-1}^{'}(u)$. This indicates that $v, w$ have been verified in the last pruning operation, and $w$ is not dominated by $v$. Thus, repeated distance computations are saved and avoided.

\begin{algorithm}[t] 
\caption{BuildMultiHNSW($D,P$)}
\label{alg:build_multi_hnsw}
\Input{$D \subset \mathbb{R}^d$ and the candidate parameter set $P$}
\Output{$\mathcal{G} = \{G_1, G_2, \ldots, G_m\}$}
\For{$i \leftarrow 1, \ldots m$}{
    initialize $G_i^{0}$ with a random point $v \in D$\;
    $m_L \leftarrow  0$\;
}
\For{each $u \in D \setminus \{v\}$}{
     randomly determine the highest layer $l$ of $u$\;
     \textbf{if} ($l > m_L$) \textbf{then} $m_L \leftarrow l$ and $ep \leftarrow u$\;
    {initialize $V_{\delta}$ as $-1$ for each vector}\;
    \For{$i \leftarrow 1, \ldots m$} {
         $(efc_i, M_i) \leftarrow P_i$ and $c \leftarrow ep$\;
         \For{each $j \leftarrow m_L$ \text{downto} $l+1$}{
            {$c \leftarrow \text{mKANNS}(G_i^j, u, 1,1, c, V_{\delta})$}\;
         }
         $C_{l+1} \leftarrow \{ c \}$\;
         \For{$j \leftarrow l$ \text{downto} 0}{
            $C_j\gets \text{mKANNS}(G_i^j, u, efc_i, efc_i,C_{j+1}[0], V_{\delta})$\;
            $N_{G_i^j}(u)\gets \text{mPrune}(u, C_j,M_i, 1)$\; 
            \For{each $v \in N_{G_i^j}(u)$}{
                $N_{G_i^j}(v) \leftarrow N_{G_i^j}(v) \cup \{u\}$\;
                \If{$|N_{G_i^j}(v)| > M_i$}{
                $N_{G_i^j}(v) \leftarrow Prune(v, N_{G_i^j}(v), M_i, 1)$\;
                }
            } 
        }
    }
 }
\textbf{return} $\mathcal{G} = \{G_1, \ldots, G_m\}$\;
\end{algorithm}

\begin{algorithm}[t]
\caption{BuildMultiVamana($D,P$)}
\label{alg:build_multi_vamana}
\Input{$D \subset \mathbb{R}^d$ and the candidate parameter set $P$}
\Output{$\mathcal{G} = \{G_1, G_2, \ldots, G_m\}$}
\For{$1 \leq i \leq m$} {
    initialize a random KNNG $G_i$ according to $P_i$\;
}
let $c$ denote the centroid of dataset $D$\;
    \For{each $u \in D$} {
        {initialize $V_{\delta}$ as $-1$ for each point}\;
        \For{$1 \leq i \leq m$} {
         $(L_i,M_i, \alpha_i) \leftarrow P_i$\;
         { $C_i \leftarrow \text{mKANNS}(G_i, u, L_i, L_i, c, V_{\delta})$}\;
       {$N_{G_i}(u) \leftarrow$ mPrune$(u, C_i, M_i, \alpha_i)$ }\;
        \For{each $v$ in $N_{G_i}(u) $} {
            \If{$|N_{G_i}(v)  \cup \{u\}| > M_i$} {
                Prune$(v, N_{G_i}(v) \cup \{ u\},  M_i, \alpha_i)$\;
            }
        }
    }
}
\Return{$\mathcal{G} = \{G_1, \ldots, G_m\}$}\;
\end{algorithm}

\eat{
\begin{algorithm}[t]
\caption{BuildMultiNSG($D,P$)}
\label{alg:build_multi_nsg}
\Input{$D \subset \mathbb{R}^d$ and the candidate parameter set $P$}
\Output{$\mathcal{G} = \{G_1, G_2, \ldots, G_m\}$}
\For{$1 \leq i \leq m$} {
    initialize a KNNG $G_i^0$ according to $P_i$\;
}
let $c$ denote the centroid of dataset $D$\;
    \For{each $u \in D$} {
        {initialize $V_{\delta}$ as $-1$ for each point}\;
        \For{$1 \leq i \leq m$} {
         $(L_i,R_i,M_i) \leftarrow P_i$\;
         {$C_i \leftarrow \text{mKANNS}(G_i^0, u, R_i, L_i, c, V_{\delta})$}\;
       {$N_{G_i}(u) \leftarrow$ mPrune$(u, C_i, M_i, 1)$ }\;
        \For{each $v$ in $N_{G_i}(u) $} {
            \If{$|N_{G_i}(v)  \cup \{u\}| > M_i$} {
                Prune$(v, N_{G_i}(v) \cup \{ u\},  M_i, 1)$\;
            }
        }
    }
}
\Return{$\mathcal{G} = \{G_1, \ldots, G_m\}$}\;
\end{algorithm}
}

\subsection{Efficient Construction of Multiple PGs}
\label{ssec:m_pg}

This part delves into the efficient construction of multiple PGs with respect to a parameter set $P = \{P_1, P_2, \ldots, P_m\}$ through our efficient multiple \texttt{Search} and multiple \texttt{Prune} methods. As previously mentioned, we consider three prominent PGs: HNSW~\cite{Hnsw}, NSG~\cite{Nsg}, and Vamana~\cite{diskann}.

\vspace{1mm}
\stitle{Efficient Construction of Multiple HNSW Graphs.} We present the details in Algorithm~\ref{alg:build_multi_hnsw}.
Notably, $ef$ and $M$ are two key construction parameters of HNSW. First, we initialize each HNSW graph (lines 1-3), where $m_L$ indicates the highest layer of the current HNSW graphs. Then, we insert each $u \in D$ into the graphs iteratively~(lines 4-19). In the loop, we first determine the highest layer $l$ for $u$ (line 5), which is used for all $m$ graphs, and then update $m_L$ if necessary (line 6). We initialize $V_{\delta}$ (line 7) and then insert $u$ into the $m$ graphs one by one (lines 8-19). For the layers from $m_L$ to $l+1$, we find the 1-ANN of $u$ in each layer, and use it as the entry point of the next layer (line 11). Here, $G_i^j$ represents the $j$-th layer of the $i$-th graph $G_i$. For the remaining layers from $l$ to 0, we find their $ efc_j$-ANNs using our efficient mKANNS method (line 14), followed by our mPrune method (line 15). Then, we add reverse edges from the pruned neighbors to $u$ (lines 17) and ensure that the node's out-degree limit is met (lines 18-19). Note that this method could be easily extended to other NSWGs such as NSW~\cite{Sw}, where we fix $m_L$ as 1, employ the closeness-first pruning method, and remove the out-degree limit. Note that $V_{\delta}$ will be deleted after inserting $u$ into all $m$ graphs, since storing $n$ $V_{\delta}$ arrays requires $O(n^2)$ space. 

\vspace{1mm}
\stitle{Efficient Construction of Multiple Vamana Graphs.} We present the details in Algorithm~\ref{alg:build_multi_vamana}. First, we build $m$ random graphs, $G_1^0, G_2^0, \ldots, G_m^0$, with different out-degrees according to each construction parameter (lines 1-2). Let $c$ be the centroid of $D$, which is used as the entry point of $k$-ANNS (line 3). Then, we insert each $u \in D$ into the $m$ graphs (lines 4-12), starting by initializing $V_{\delta}$ (line 5). To insert $u$ into each graph $G_i$ (lines 6-12), we first obtain $C_i(u)$ (line 8) by our mKANNS method (line 8), followed by our mPrune method (line 9). We guarantee the out-degree limit for each node when inserting the edges from neighbors in $N_{G_i}(u)$ to $u$ (lines 10-12). 

\vspace{1mm}
\stitle{Efficient Construction of Multiple NSG Graphs.} 
We can extend Algorithm~\ref{alg:build_multi_vamana} for other RNGs by only modifying the construction of the initial graph and the pruning strategy. Take NSG as an example,
we first build a KNNG $G_i^0$ by KGraph~\cite{KGraph} in line 2 instead of a random one. Next, we employ the KNNG $G_i^0$ to search for line 8, and fix $\alpha_i = 1$ for each parameter, since NSG employs $\alpha_i = 1$ in any case.

\begin{table}[t]
\caption{Statistics of Datasets.}
\label{tb:data}
\begin{center}
\begin{tabular}{|c|r|r|r|c|}
\hline
\textbf{Dataset}& \textbf{Dim.} & \textbf{\#vectors} & \textbf{\#queries} & \textbf{Type}\\
\hline \hline
\text{Sift}& \text{128} & \text{1,000,000}& \text{1,000}& \text{Image}\\
\hline
\text{Gist}& \text{960} & \text{1,000,000}& \text{1,000}& \text{Image}\\
\hline
\text{Glove}& \text{100} & \text{1,183,514}& \text{1,000}& \text{Text}\\
\hline
\text{Msong}& \text{420} & \text{992,272}& \text{200}& \text{Audio}\\
\hline
\end{tabular}
\end{center}
\end{table}

\begin{table*}[ht]
\caption{Comparing FastPGT with Baseline Methods in Parameter Tuning Efficiency.}
\label{tb:Tuning Efficiency}
\begin{center}
\renewcommand{\arraystretch}{1.2}
\begin{tabular}{|c|c|c|rr|rr|rr|}
\hline
\multirow{2}{*}{\bf Datasets} & \multirow{2}{*}{\bf Methods} & \multirow{2}{*}{\bf \#Parameters} & \multicolumn{2}{c|}{\bf HNSW}         & \multicolumn{2}{c|}{\bf NSG}   & \multicolumn{2}{c|}{\bf Vamana}             \\ \cline{4-9} 
  &   &    & \multicolumn{1}{c|}{$\#dist$ ($\times 10^9$)}   & {$cost$ (sec)}    & \multicolumn{1}{c|}{$\#dist$ ($\times 10^9$)}   & {$cost$ (sec)}    & \multicolumn{1}{c|}{$\#dist$ ($\times 10^9$)}   & {$cost$ (sec)}    \\ \hline \hline
\multirow{4}{*}{Gist}  & RandomSearch     & 100     & \multicolumn{1}{r|}{\num{452}} & 40,572 & \multicolumn{1}{r|}{\num{521}} & 34,867   & \multicolumn{1}{r|}{\num{833}} & 45,670 \\ \cline{2-9} 
  & OtterTune   & 100    & \multicolumn{1}{r|}{\num{413}} & 39,891 & \multicolumn{1}{r|}{\num{490}} & 32,234   & \multicolumn{1}{r|}{\num{772}} & 43,129 \\ \cline{2-9} 
  & VDTuner    & 100  & \multicolumn{1}{r|}{\lightcolor\num{382}} & \lightcolor33,006 & \multicolumn{1}{r|}{\lightcolor\num{453}} & \lightcolor30,154 & \multicolumn{1}{r|}{\lightcolor\num{721}} & \lightcolor42,882 \\ \cline{2-9} 
    & FastPGT     & 100   & \multicolumn{1}{r|}{\deepcolor{\num{189}}} & {\deepcolor{15,045}} & \multicolumn{1}{r|}{\deepcolor{\num{143}}} & {\deepcolor{12,697}} & \multicolumn{1}{r|}{\deepcolor{\num{211}}} & {\deepcolor{18,278}} \\ \hline
\multirow{4}{*}{Sift}  & RandomSearch     & 100  & \multicolumn{1}{r|}{\num{567}}         &4,594   & \multicolumn{1}{r|}{\num{401}}     &4,108   & \multicolumn{1}{r|}{\num{263}}   &4,854   \\ \cline{2-9} 
  & OtterTune  & 100  & \multicolumn{1}{r|}{564}  & 4,529  & \multicolumn{1}{r|}{\num{390}}  &4,010 & \multicolumn{1}{r|}{252} &4,773  \\ \cline{2-9} 
   & VDTuner   & 100  & \multicolumn{1}{r|}{\lightcolor407}  & \lightcolor3,282  & \multicolumn{1}{r|}{\lightcolor349}  & \lightcolor3,424   & \multicolumn{1}{r|}{\lightcolor211}  &\lightcolor4,045       \\ \cline{2-9} 
    & FastPGT   & 100   & \multicolumn{1}{r|}{\deepcolor{\num{184}}} & \deepcolor{2,071}  & \multicolumn{1}{r|}{\deepcolor{\num{163}}}      &\deepcolor{2,010}   & \multicolumn{1}{r|}{\deepcolor{\num{133}}}    &\deepcolor{2,336}   \\ \hline
\multirow{4}{*}{Glove} & RandomSearch     & 100   & \multicolumn{1}{r|}{\num{732}}     & 18,987   & \multicolumn{1}{r|}{2,304}   &18,018    & \multicolumn{1}{r|}{1,320}     & 17,437    \\ \cline{2-9} 
& OtterTune   & 100  & \multicolumn{1}{r|}{\num{729}} &18,321  & \multicolumn{1}{r|}{2,297} &17,717  & \multicolumn{1}{r|}{1,301} &16,749   \\ \cline{2-9} 
& VDTuner   & 100  & \multicolumn{1}{r|}{\lightcolor653}  &\lightcolor16,130   & \multicolumn{1}{r|}{\lightcolor1,898}    &\lightcolor15,015    & \multicolumn{1}{r|}{\lightcolor1,104}         &\lightcolor16,366   \\ \cline{2-9} 
& FastPGT   & 100  & \multicolumn{1}{r|}{\deepcolor{\num{383}}}  &\deepcolor{10,711}  & \multicolumn{1}{r|}{\deepcolor{\num{389}}}  & \deepcolor{10,485} & \multicolumn{1}{r|}{\deepcolor{\num{332}}}  & \deepcolor{8,421}  \\ \hline
\multirow{4}{*}{Msong} & RandomSearch & 100 & \multicolumn{1}{r|}{\lightcolor\num{641}}  & \lightcolor12,315  & \multicolumn{1}{r|}{\lightcolor\num{310}}  &\lightcolor9,421  & \multicolumn{1}{r|}{\lightcolor\num{931}}  & \lightcolor13,540    \\ \cline{2-9} 
& OtterTune  & 100 & \multicolumn{1}{r|}{\num{727}}    &10,898  & \multicolumn{1}{r|}{\num{432}} &13,717    & \multicolumn{1}{r|}{1,201}  &  15,246\\ \cline{2-9} 
& VDTuner  & 100  & \multicolumn{1}{r|}{\num{654}} & 12,390 & \multicolumn{1}{r|}{\num{333}}  &9,698   & \multicolumn{1}{r|}{\num{976}} &14,349    \\ \cline{2-9} 
& FastPGT  & 100 & \multicolumn{1}{r|}{\deepcolor{\num{150}}}  &\deepcolor{6,164}  & \multicolumn{1}{r|}{\deepcolor{\num{67}}}  & \deepcolor{4,615} & \multicolumn{1}{r|}{\deepcolor{\num{299}}}   & \deepcolor{6,673}  \\ \hline
\end{tabular}
\end{center}
\end{table*}

\section{Experiments}
\label{sec:exp}

In this section, we present our experimental results. We show the experimental settings and results in order to demonstrate the superiority of our method over its competitors. 

\subsection{Experimental Settings}
\label{ssec:exps}

\stitle{Datasets.} 
We conduct 4 public datasets, Sift, Gist, Glove and Msong, which are widely used to evaluate the performance of $k$-ANNS methods. Their data statistics are shown in Table \ref{tb:data}. Here, we use dim. to represent the dimensionality of the vectors and \#vectors as the number of vectors in $D$. Let $Q$ be the set of queries and \#queries the number of queries in $Q$.
To be specific, \textbf{Sift} \footnote{\label{fn:bigann}\textit{http://corpus-texmex.irisa.fr}.} contains 1,000,000 128-dimensional SIFT vectors.  \textbf{Gist} \footref{fn:bigann} consists of 1,000,000 960-dimensional GIST vectors. \textbf{Glove} \footnote{\textit{http://nlp.stanford.edu/projects/glove/}.} comprises 1,183,514 100-dimensional word feature vectors extracted from Tweets. Msong~\footnote{\textit{http://www.ifs.tuwien.ac.at/mir/msd/download.html}.} contains 994,185 420-dimensional Temporal Rhythm Histograms extracted from the same number of contemporary popular music tracks. For each dataset, we randomly select a specified number of points, as presented in Table \ref{tb:data}, to form the query set $Q$, which is used to evaluate the $k$-ANNS performance of PGs generated according to the construction parameters.


\stitle{Performance Indicators.} 
We are concerned with two performance aspects of each tuning method, i.e., tuning efficiency and tuning quality. The former is estimated using two performance indicators: the time cost incurred and the number of distance computations during the tuning process. The tuning quality can be evaluated by the $k$-ANNS performance of the PG generated according to the finally returned parameter. As aforementioned, the $k$-ANNS performance could be estimated by two performance indicators, i.e., Queries Per Second (QPS) and $Recall@k$. 
The reported $Recall@k$ is averaged over the entire query set $Q$. By default, we set $k=10$ unless specified. All reported results are averaged over 3 independent runs.


\stitle{Environments.} We implement all methods in \texttt{C++} and compile them with \texttt{g++ 11}. We conduct all the experiments on a server equipped with two Intel(R) Xeon(R) Gold 6240 CPUs, each of which has 18 cores and 36 hyper-threads, and 380 GB of memory. Its operating system is \texttt{CentOS 7.6}. 

\stitle{Baselines.} We compare our method FastPGT with several SOTA a parameter tuning methods, i.e., RandomSearch~\cite{Bergstra_Bengio_2012}, VDTuner~\cite{yang2024vdtuner} and OtterTune~\cite{van2017automatic}. To make a fair comparison, we set the tuning budget for each method to explore 100 parameter candidates. We set the batch size to 10 for FastPGT, which means that it recommends 10 parameter candidates in each iteration. Here, we omit GridSearch due to its prohibitively high computational cost, which grows exponentially with the number of construction parameters, making it impractical for our tuning task.  




\subsection{Main Results}
\label{ssec:mainr}

\begin{figure*}[t]
\centering
\includegraphics[width=0.99\linewidth]{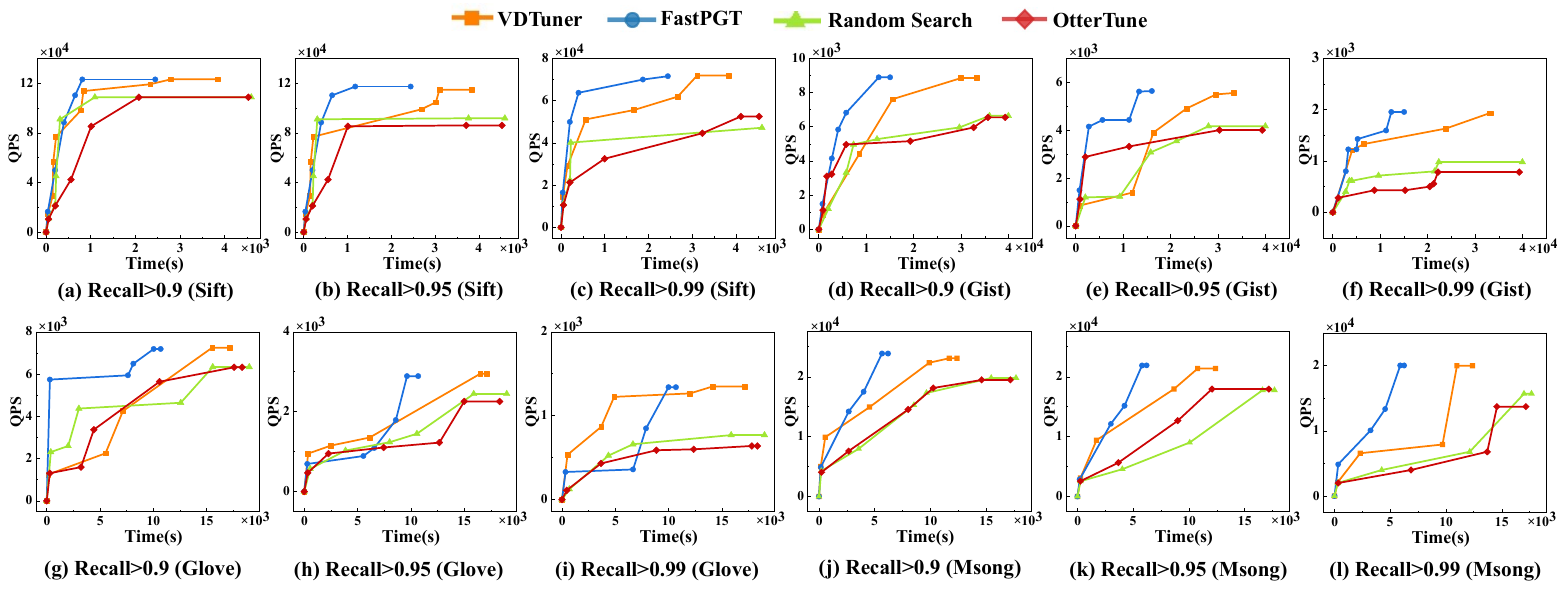}
\caption{Comparison of the performance of our method and the baseline in tuning HNSW.}
\label{fig:exp_hnsw_tuning}
\end{figure*}

\begin{figure*}[t]
\centering
\includegraphics[width=0.99\linewidth]{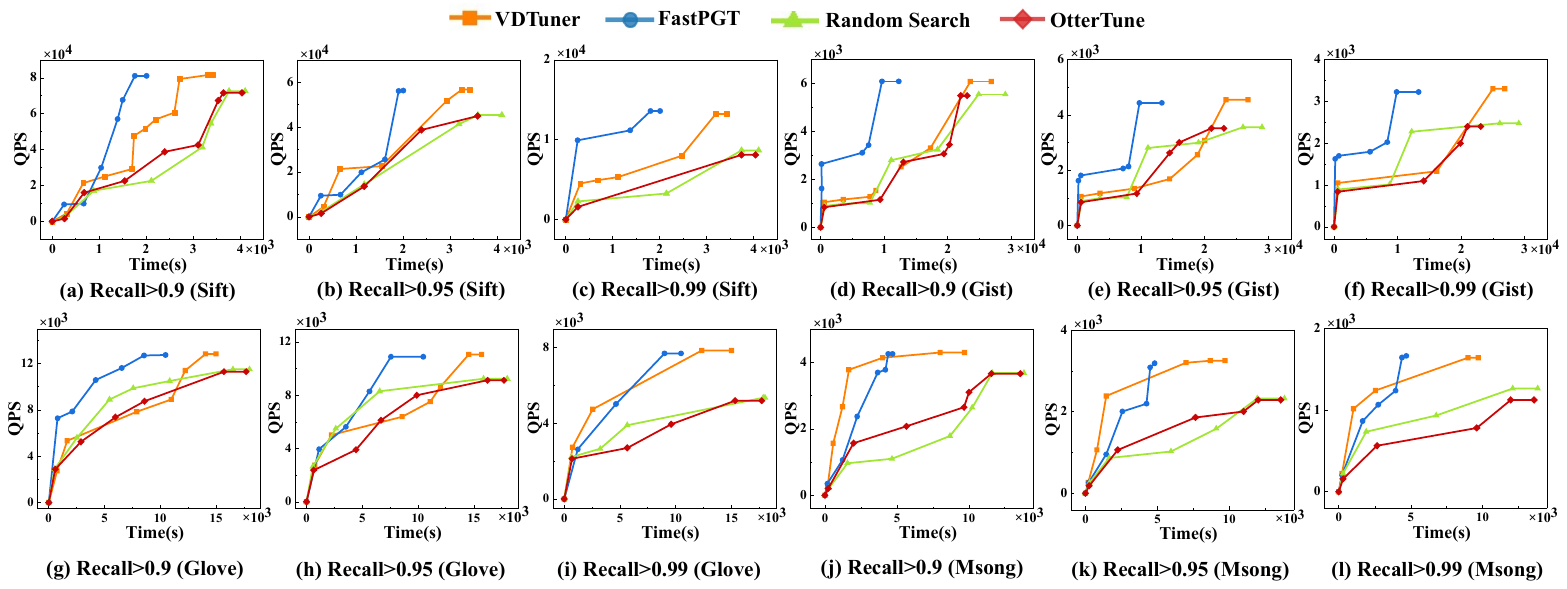}
\caption{Comparison of the performance of our method and the baseline in tuning NSG.}
\label{fig:exp_nsg_tuning}
\end{figure*}

\begin{figure*}[t]
\centering
\includegraphics[width=0.99\linewidth]{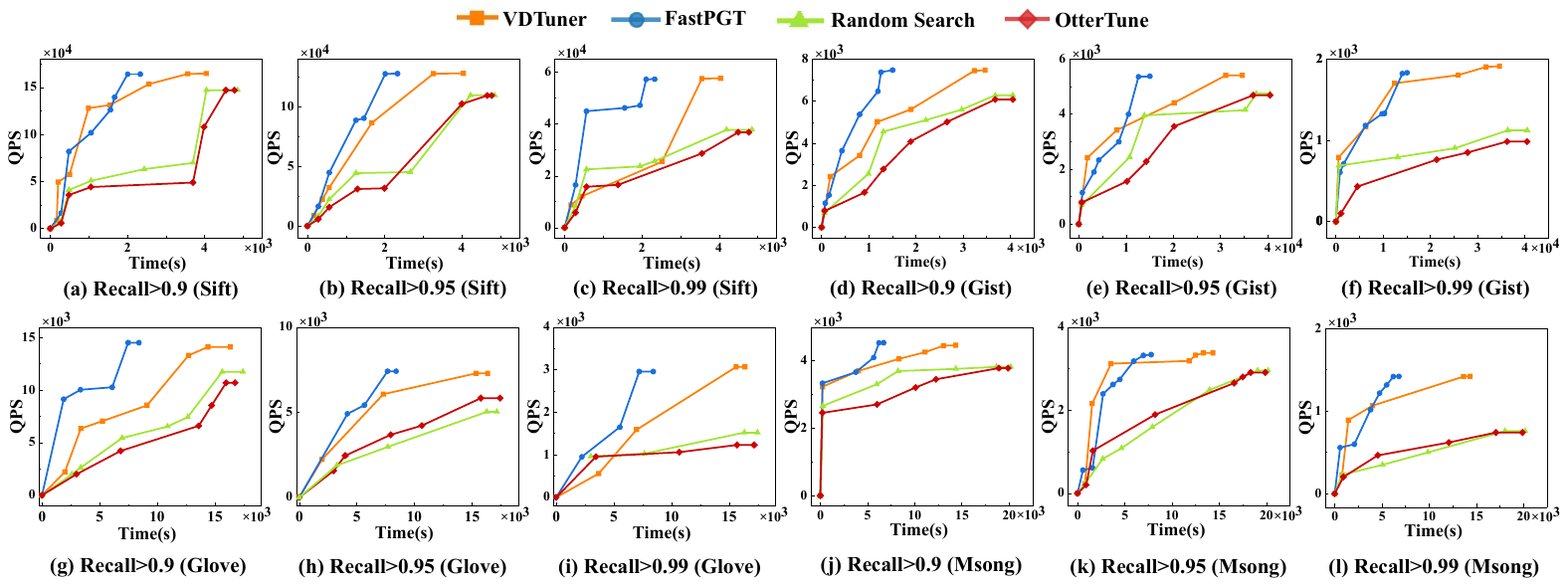}
\caption{Comparison of the performance of our method and the baseline in tuning Vamana.}
\label{fig:exp_vamana_tuning}
\end{figure*}

In this part, we show the main experimental results, where we tune the construction parameters for three representative PGs, i.e., HNSW, NSG, and Vamana. 

\stitle{Tuning Efficiency.} 
Given the same number of parameters recommended and estimated, we compare the tuning efficiency between our method FastPGT and its competitors in tuning efficiency in Table \ref{tb:Tuning Efficiency}. 
We can see that FastPGT obviously achieves less tuning cost than its competitors. In particular, FastPGT obtains {2.2x, 2.37x, 2.35x} speedup over the SOTA method VDTuner in time cost, when tuning HNSW, NSG, and Vamana on Gist, respectively. Moreover, similar phenomena could be found on other datasets.  
This is because our method effectively reduces the repeated distance computations in parameter estimation. 
To be specific, FastPGT only computes 50\%, 31.1\%, and 29.2\% of distances of HNSW, NSG, and Vamana on Gist, respectively, compared with VDTuner. Moreover, similar phenomena can also be observed in other datasets and baselines. 
As a result, our method delivers clear speedups during the tuning process. 

\stitle{Tuning Quality.} 
Given the time budget for tuning, we compare the performance of $k$-ANNS between FastPGT and its competitors. We show the experimental results on HNSW, NSG, and Vamana in Figure~\ref{fig:exp_hnsw_tuning}, \ref{fig:exp_nsg_tuning} and \ref{fig:exp_vamana_tuning}, respectively. Here, we consider three $Recall@k$ values — 0.9, 0.95, and 0.99 —and use the corresponding QPS values to measure the $k$-ANNS performance. 
With the same budget, FastPGT presents better tuning quality than its competitors in almost all settings. 
In particular, take NSG on Gist as an example in Figure \ref{fig:exp_nsg_tuning}. FastPGT spends only 37.8\% of the tuning cost of VDtuner to obtain comparable tuning quality across various $Recall@k$ values, and consumes only 9.81\%, 19.82\%, and 23.22\% of the tuning cost of OtterTuner to obtain comparable or even better tuning quality with the three $Recall@k$ values, respectively. 
Hence, it demonstrates that FastPGT incurs much lower tuning costs to achieve comparable or even better tuning quality than its competitors. 


\subsection{Ablation Study}
\label{ssec:ablation}
The tuning efficiency of FastPGT is mainly attributed to our efficient multiple \texttt{Search} operations (ESO) and efficient multiple \texttt{Prune} method (EPO).
In this part, we study their contributions to the overall cost. 
We design three configurations of FastPGT, as summarized in Table \ref{tb:ablation}, Config (I) disables both ESO and EPO, Config (II) activates ESO only, and Config (III) enables both ESO and EPO. 
Note that all of them have the same tuning quality but different tuning efficiency. Hence, we compare them in tuning efficiency with two performance metrics: (1) relative tuning cost (RTC) to Config (I) and (2) relative number of distance computations (RDC) to Config (I). 
Both RTC and RDC of Config (I) are 1. 
We present the results on Msong in Table~\ref{tb:ablation}. 

\stitle{Effectiveness of the ESO.}
By comparing Config (I) and (II), the introduction of ESO yields substantial performance improvements. Specifically, Config (II) with ESO only computes 39\%, 44\% and 57\% distances of Config (I) on NSG, HNSW, and Vamana, respectively, and thus requires 54\%, 52\% and 54\% tuning cost of Config (I,) respectively. 

\stitle{Effectiveness of the EPO.}
Config (III) with EPO obviously computes fewer distances and thus less tuning cost than Config (II) without EPO, since EPO reduces the repeated distance computations in \texttt{Prune} operations. 
To be specific, let us consider NSG. Config (III) 67 billion distances, which is only 52\% of Config (II). Further, the tuning cost of Config (III) is 0.87 of Config (II).

\eat{
\begin{table}[t]
\setlength{\tabcolsep}{3pt}
\caption{The effects of efficient search operations(ESO) and efficient Pruning operations(EPO). $cost$ means the time cost in seconds of each method, while $\#dist$ the number of distance calculations. $Ratio_c$ indicates the portion of reduced cost , while $Ratio_d$ the portion of reduced distance calculation.}
\label{tb:ablation}
\begin{center}
 \resizebox{0.49\textwidth}{!}{
\begin{tabular}{c|ccc|cccc}
\hline
Index & Config & ESO &  EPO  & Cost(sec) & Dist & $Ratio_c$ &$Ratio_d$ \\
\hline
  &(I) & $\times$  &$\times$ &9698  & \num{3.3e11}&0&0 \\
NSG  &(II) & \checkmark  & $\times$ &5209 &	\num{1.3e11} &0.46&0.61 \\
  &(III) & \checkmark  & \checkmark &4615 &\num{67}&0.53&0.82 \\
\hline
  &(I) & $\times$  & $\times$ &12390  &\num{531} &	 0&0 \\
HNSW  &(II) & \checkmark  & $\times$ &6532  &	\num{232} &0.48&0.56 \\
  &(III) &\checkmark  & \checkmark &6164  &\num{150} &0.51 &0.71 \\
\hline
 &(I) & $\times$  & $\times$ &14349  &\num{9.8e11}&	 0&0 \\
Vamana &(II) & \checkmark  & $\times$ &7797  &\num{558} &	0.46&0.43 \\
  &(III) & \checkmark  & \checkmark &6673  &\num{302} &0.53&0.69 \\
\hline
\end{tabular}
}
\end{center}
\end{table}
}

\begin{table}[t]
\setlength{\tabcolsep}{3pt}
\renewcommand{\tabcolsep}{1.7pt}
\caption{The effects of ESO and EPO. }
\label{tb:ablation}
\begin{center}
\begin{tabular}{|c|ccc|r|r|r|r|}
\hline
\textbf{Index} & \textbf{Config} & \textbf{ESO} &  \textbf{EPO}  & \textbf{$cost$ (sec)} & \textbf{$\#dist$ $(\times 10^9)$} & \textbf{RTC} & \textbf{RDC} \\
\hline \hline
  &(I) & $\times$  & $\times$ &9,698  & \num{331} & 1 & 1 \\
NSG  &(II) & \checkmark  & $\times$ &5,209 & \num{129} & 0.54 & 0.39 \\
  &(III) & \checkmark  & \checkmark &4,615  &\num{67}& 0.47 & 0.18 \\
\hline
  &(I) & $\times$  & $\times$ &12,390  &\num{531} &  1 & 1 \\
HNSW  &(II) & \checkmark  & $\times$ &6,532  &	\num{229} & 0.52 & 0.44 \\
  &(III) &\checkmark  & \checkmark &6,164  &\num{150} & 0.49 & 0.29 \\
\hline
 &(I) & $\times$  & $\times$ &14,349  &\num{982}& 1 & 1 \\
Vamana &(II) & \checkmark  & $\times$ &7,797  &\num{558} & 0.54 & 0.57 \\
  &(III) & \checkmark  & \checkmark &6,673  &\num{299} & 0.47 & 0.31 \\
\hline
\end{tabular}
\end{center}
\end{table}

\subsection{Extensions to Other Recommendation Models}
\label{sec:effectiveness_other_models}

Notably, our method for accelerating the construction of multiple PGs is model-agnostic, independent of the recommendation model, and can be seamlessly integrated into models that enable batch recommendations. 
In this part, we combine ESO and EPO with RandomSearch, denoted as RandomSearch$^+$ (RS$^+$). RandomSearch first generates its batch of candidate parameters, and our methods are then employed to build the corresponding PGs simultaneously. 

\begin{table}[t]
\caption{Extending our method to RandomSearch (RS). RS$^+$ denotes the RS enhanced with both ESO and EPO.}
\label{tb:RS}
\begin{center}
 \resizebox{0.49\textwidth}{!}{
\begin{tabular}{|c|c|r|r|r|r|}
\hline
\textbf{Datasets}&\textbf{Methods} & \textbf{$cost$ (sec)} & \textbf{\makecell{$\#dist$ $(\times 10^9)$ }}& \textbf{RTC}& \textbf{RDC} \\
\hline \hline 
\multirow{2}{*}{Sift} 
&\text{RS}& 4,594 & \num{567} &1 &1 \\ 
& RS$^+$ &1,975 & \num{97} &0.43 &0.17 \\ 
\hline
\multirow{2}{*}{Gist}
&\text{RS}& 40,572 &\num{452} & 1 &1\\ 
& RS$^+$ &13,794 &\num{68} &0.34 &0.15\\ 
\hline
\multirow{2}{*}{Glove}
&\text{RS}& 18,987 &  \num{732} &1 &1 \\ 
& RS$^+$ &9,873  &\num{154}  &0.52 &0.21 \\\cline{2-6}
\hline
\multirow{2}{*}{Msong}
&\text{RS}& 12,315 & \num{641} &1 &1\\ 
& RS$^+$ &5,049 &\num{122}  &0.41 &0.19\\ 
\hline
\end{tabular}
}
\end{center}
\end{table}

As shown in Table~\ref{tb:RS}, RandomSearch$^+$ obviously speeds up RandomSearch. To be specific, RandomSearch$^+$ only consumes 34\%-52\% of the tuning cost of RandomSearch, since it reduces the distance computations. RandomSearch$^+$ only computes 15\%-21\% of the distance computations of RandomSearch. 



\eat{
\section{Related Work}
\label{sec:related}

In this section, we review related works of this paper, including PG based $k$-ANNS methods and parameter tuning methods for $k$-ANNS. 

\subsection{Proximity Graphs based $k$-ANNS}
\label{sec:related-pg}

PG-based methods have been considered as the SOTA for $k$-ANNS methods on high-dimensional vectors. 
Existing PGs could be divided into three categories~\cite{FastPG}, i.e., (1) $k$-nearest neighbor graph (KNNG), (2) relative neighborhood graph (RNG) and (3) navigable small world graph (NSWG). 
KNNG~\cite{KGraph, KnngcSurvey} builds edges between each vector and its $k$-ANN in the dataset. RNG builds on top of an initial graph, which could be a KNNG or a random graph, and finds a sufficient number of close neighbors by $k$-ANNS on the initial graph. Then, RNG builds edges between each vector and its diverse neighbors obtained by pruning the close neighbors found via the various pruning methods~\cite{Nsg, diskann, Nssg, taumg, ALMG}.
Unlike KNNG and RNG, which assume the full knowledge of the whole dataset before the graph construction, NSWG assumes no prior knowledge of the full dataset and thus incrementally builds the graph by inserting each vector into the graph one by one. The insertion of each vector contains the $k$-ANNS operation on the currently incomplete graph and then prunes the $k$-ANN results, followed by building edges between the node and those pruned neighbors.
In general, KNNG has worse $k$-ANNS performance than both RNG and NSWG. This is because KNNG is a directed graph, while RNG and NSWG connect each vector and its neighbors via undirected edges. As a result, KNNG is less connected with a small $k$ value due to fewer neighbors, which usually leads to local optima being returned. On the other hand, KNNG with a large $k$ value causes significantly more distance computations when expanding each node.\

In addition, there are several I/O-efficient PG methods, such as DiskANN~\cite{diskann} and Starling~\cite{Starling}, and GPU-accelerated PG methods, such as SONG~\cite{SONG} and CAGRA~\cite{CAGRA}. 



\subsection{Construction Parameter Tuning for ANNS}
\label{sec:related-tuning}

Existing tuning methods could be divided into two categories, i.e., heuristic based methods and learning based methods. 
The former employs various heuristics to generate a set of promising candidate parameters and then verifies their quality. In contrast, the latter employs a machine learning technique to create high-quality candidates. 
As a classical heuristic based method, Grid Search~\cite{liashchynskyi2019grid} samples the candidates from the parameter space in a grid manner and then verifies each candidate by building the corresponding PG and then testing its $k$-ANNS performance, which is considerably expensive. To trade off tuning cost and tuning quality, Random Search~\cite{randomsearch} samples a specified number of parameters in a random manner. 

The learning based methods treat the parameter tuning as an optimization problem. FLANN tuning parameters for tree-based indexes~\cite{2014PAMI-scalableNNalg} frames the tuning task as a single-objective optimization problem, aiming to find the parameter with the minimum cost for user-specific $Recall@k$. However, it fails to consider the trade-off across multiple $Recall@k$ values. In contrast, VDTuner~\cite{yang2024vdtuner} employs multi-objective Bayesian optimization (MOBO) to directly optimize the QPS-recall trade-off. 
Besides, there exist general-purpose tuning methods such as OtterTune~\cite{van2017automatic}, which uses the Gaussian Process Regression to build the parameter recommendation model. 

Notably, existing methods all focus on parameter recommendation but fail to accelerate parameter estimation, which contributes to the majority of the overall tuning cost. 

}

\section{Conclusion}
\label{sec:con}
In this paper, we study the tuning of the construction parameters of proximity graphs (PG), which are the SOTA methods for $k$-ANNS. We aim to efficiently recommend high-quality construction parameters for a given dataset and a specific type of PG. 
To address this problem, we propose a novel tuning framework, FastPGT, comprising two steps: parameter recommendation and parameter estimation. We design a new parameter recommendation model that recommends a batch of parameters per iteration based on the SOTA method, VDTuner. 
Further, we build multiple PGs simultaneously in each parameter estimation, which is equipped with our efficient multiple \texttt{Search} and \texttt{Prune} operations.
Notably, our efficient parameter estimation method is both model-agnostic and PG-agnostic. 
We conduct extensive experiments on real datasets to demonstrate the superiority of our method FastPGT over its competitors. 
According to the results, FastPGT consumes much lower tuning cost than its competitors while achieving comparable or even better tuning quality on three representative PGs, i.e., HNSW, Vamana, and NSG.

\bibliographystyle{IEEEtran}
\bibliography{./IEEEabrv,main}

@string{pami = {IEEE TPAMI}}

@article{liu2024retrievalattentionacceleratinglongcontextllm,
  title={Retrievalattention: Accelerating long-context llm inference via vector retrieval},
  author={Liu, Di and Chen, Meng and Lu, Baotong and Jiang, Huiqiang and Han, Zhenhua and Zhang, Qianxi and Chen, Qi and Zhang, Chengruidong and Ding, Bailu and Zhang, Kai and others},
  journal={arXiv preprint arXiv:2409.10516},
  year={2024}
}

@inproceedings{deng2025alayadb,
  title={{AlayaDB}: The Data Foundation for Efficient and Effective Long-context LLM Inference},
  author={Deng, Yangshen and You, Zhengxin and Xiang, Long and Li, Qilong and Yuan, Peiqi and Hong, Zhaoyang and Zheng, Yitao and Li, Wanting and Li, Runzhong and Liu, Haotian and others},
  booktitle={Companion of SIGMOD},
  pages={364--377},
  year={2025}
}

@article{zhang2025pqcache,
  title={Pqcache: Product quantization-based kvcache for long context llm inference},
  author={Zhang, Hailin and Ji, Xiaodong and Chen, Yilin and Fu, Fangcheng and Miao, Xupeng and Nie, Xiaonan and Chen, Weipeng and Cui, Bin},
  journal={Proceedings of the ACM on Management of Data},
  volume={3},
  number={3},
  pages={1--30},
  year={2025},
  publisher={ACM New York, NY, USA}
}

@inproceedings{KGraph,
	author = {Dong, Wei and Moses, Charikar and Li, Kai},
	booktitle = {WWW},
	pages = {577--586},
	title = {Efficient k-nearest neighbor graph construction for generic similarity measures},
	year = {2011}}

@inproceedings{MyKNNGCsurvey,
	author = {Liu, Yingfan and Cheng, Hong and Cui, Jiangtao},
	booktitle = {arXiv preprint arXiv:2112.02234},
	title = {Revisiting k-Nearest Neighbor Graph Construction on High-Dimensional Data : Experiments and Analyses},
	year = {2021}}

@article{Sw,
  author  = {Yury Malkov and Alexander Ponomarenko and Andrey Logvinov and Vladimir Krylov},
  title   = {Approximate nearest neighbor algorithm based on navigable small world graphs},
  journal = {Information Systems},
  year    = {2014},
  volume  = {45},
  pages   = {61-68},
}

@article{Hnsw,
	author = {Malkov, Yury and Yashunin, Dmitry},
	journal = {IEEE TPAMI},
	publisher = {IEEE},
	title = {Efficient and robust approximate nearest neighbor search using hierarchical navigable small world graphs},
	pages = { 824--836},
	volume = {42},
	 number = {4},
	year = {2018}}

@article{Dpg,
	author = {Li, Wen and Zhang, Ying and Sun, Yifang and Wang, Wei and Li, Mingjie and Zhang, Wenjie and Lin, Xuemin},
	journal = {IEEE TKDE},
	publisher = {IEEE},
	title = {Approximate nearest neighbor search on high dimensional data -- experiments, analyses, and improvement},
	volume = {32},
	number = {8},
	pages = {1475--1488},
	year = {2019}}

@article{Nsg,
	author = {Fu, Cong and Xiang, Chao and Wang, Changxu and Cai, Deng},
	journal = {PVLDB},
	number = {5},
	pages = {461--474},
	publisher = {VLDB Endowment},
	title = {Fast approximate nearest neighbor search with the navigating spreading-out graph},
	volume = {12},
	year = {2019}}

@article{Nssg,
author = {Fu, Cong and Wang, Changxu and Cai, Deng},
title = {High Dimensional Similarity Search With Satellite System Graph: Efficiency, Scalability, and Unindexed Query Compatibility},
year = {2022},
issue_date = {Aug. 2022},
publisher = {IEEE Computer Society},
address = {USA},
volume = {44},
number = {8},
journal = {IEEE TPAMI},
pages = {4139–4150},
numpages = {12}
}

@inproceedings{HuangSSXZPPOY20,
  author       = {Jui{-}Ting Huang and
                  Ashish Sharma and
                  Shuying Sun and
                  Li Xia and
                  David Zhang and
                  Philip Pronin and
                  Janani Padmanabhan and
                  Giuseppe Ottaviano and
                  Linjun Yang},
  title        = {Embedding-based Retrieval in Facebook Search},
  booktitle    = {{KDD}},
  pages        = {2553--2561}
}

@inproceedings{li2021embedding,
  title={Embedding-based product retrieval in taobao search},
  author={Li, Sen and Lv, Fuyu and Jin, Taiwei and Lin, Guli and Yang, Keping and Zeng, Xiaoyi and Wu, Xiao-Ming and Ma, Qianli},
  booktitle={SIGKDD},
  pages={3181--3189},
  year={2021}
}

@inproceedings{OkuraTOT17,
  author       = {Shumpei Okura and
                  Yukihiro Tagami and
                  Shingo Ono and
                  Akira Tajima},
  title        = {Embedding-based News Recommendation for Millions of Users},
  booktitle    = {{SIGKDD}},
  pages        = {1933--1942},
  publisher    = {{ACM}},
  year         = {2017}
}

@article{survey2021,
author = {Wang, Mengzhao and Xu, Xiaoliang and Yue, Qiang and Wang, Yuxiang},
title = {A Comprehensive Survey and Experimental Comparison of Graph-Based Approximate Nearest Neighbor Search},
year = {2021},
journal = {PVLDB},
issue_date = {July 2021},
publisher = {VLDB Endowment},
volume = {14},
number = {11},
pages = {1964–1978},
numpages = {15}
}

@article{Kdtree,
  title={Multidimensional binary search trees used for associative searching},
  author={Bentley, Jon Louis},
  journal={Communications of the ACM},
  volume={18},
  number={9},
  pages={509--517},
  year={1975},
}

@INPROCEEDINGS{Rtree,
  author = {Antonin Guttman},
  title = {R-trees: a dynamic index structure for spatial searching},
  booktitle = {SIGMOD},
  year = {1984},
  pages = {47--57}, 
}

@INPROCEEDINGS{Xtree,
  author = {Stefan Berchtold and Daniel A. Keim and Hans-Peter Kriegel},
  title = {The {X}-tree: An Index Structure for High-Dimensional Data},
  booktitle = {VLDB},
  year = {1996},
  pages = {28--39}
}

@INPROCEEDINGS{Srtree,
  author = {Katayama, Norio and Satoh, Shin'ichi},
  title = {The SR-tree: an index structure for high-dimensional nearest neighbor
  queries},
  booktitle = {SIGMOD},
  year = {1997},
  pages = {369 - 380}
}

@ARTICLE{PQ,
  author = {Herve Jegou and Matthijs Douze and Cordelia Schmid},
  title = {Product quantization for nearest neighbor search},
  journal = {IEEE TPAMI},
  year = {2011},
  volume = {33(1)},
  pages = {117-128},
}

@article{IMI,
  title={The inverted multi-index},
  author={Babenko, Artem and Lempitsky, Victor},
  journal={IEEE TPAMI},
  volume={37},
  number={6},
  pages={1247--1260},
  year={2014},
  publisher={IEEE}
}

@inproceedings{diskann,
author = {Subramanya, Suhas Jayaram and Devvrit  and Kadekodi, Rohan and Krishnaswamy, Ravishankar and Simhadri, Harsha},
title = {DiskANN: Fast Accurate Billion-point Nearest Neighbor Search on a Single Node},
booktitle = {NeurIPS},
year = {2019}
}

@INPROCEEDINGS{Datar2004,
  author = {Datar, Mayur and Immorlica, Nicole and Indyk, Piotr and Mirrokni,
	Vahab S.},
  title = {Locality-sensitive hashing scheme based on p-stable distributions},
  booktitle = {SoCG},
  year = {2004},
  pages = {253-262}
}

@INPROCEEDINGS{MPlsh,
  author = {Lv, Qin and Josephson, William and Wang, Zhe and Charikar, Moses
  and Li, Kai},
  title = {{Multi-probe LSH}: efficient indexing for high-dimensional similarity
  search},
  booktitle = {VLDB},
  year = {2007},
  pages = {950--961}
}

@INPROCEEDINGS{LSB,
  author = {Tao, Yufei and Yi, Ke and Sheng, Cheng and Kalnis, Panos},
  title = {Quality and efficiency in high dimensional nearest neighbor search},
  booktitle = {SIGMOD},
  year = {2009},
  pages = {563--576}, 
}

@INPROCEEDINGS{C2LSH,
  author = {Gan, Junhao and Feng, Jianlin and Fang, Qiong and Ng, Wilfred},
  title = {Locality Sensitive Hashing Scheme Based on Dyanmic Collision Counting},
  booktitle = {SIGMOD},
  year = {2012},
  pages = {541-552}
}

@ARTICLE{SKLSH,
  author = {Yingfan Liu and Jiangtao Cui and Zi Huang and Hui Li and Hengtao
  shen},
  title = {{SK-LSH} : an efficient index structure for approximate nearest neighbor
  search},
  journal = {PVLDB},
  year = {2014},
  volume = {7},
  pages = {745-756},
  number = {9},
}

@article{taumg,
  author       = {Yun Peng and
                  Byron Choi and
                  Tsz Nam Chan and
                  Jianye Yang and
                  Jianliang Xu},
  title        = {Efficient Approximate Nearest Neighbor Search in Multi-dimensional
                  Databases},
  journal      = {Proceedings of the ACM on Management of Data},
  volume       = {1},
  number       = {1},
  pages        = {54:1--54:27},
  year         = {2023}
}

@article{ALMG,
author = {Xie, Jiadong and Yu, Jeffrey Xu and Liu, Yingfan},
title = {Graph Based K-Nearest Neighbor Search Revisited},
year = {2025},
publisher = {Association for Computing Machinery},
journal = {ACM TODS},
month = may
}

@inproceedings{liu2025privacy,
  title={Privacy-preserving approximate nearest neighbor search on high-dimensional data},
  author={Liu, Yingfan and Zhang, Yandi and Xie, Jiadong and Li, Hui and Yu, Jeffrey Xu and Cui, Jiangtao},
  booktitle={ICDE},
  pages={3017--3029},
  year={2025},
  organization={IEEE}
}

@article{AumullerBF20,
  author       = {Martin Aum{\"{u}}ller and
                  Erik Bernhardsson and
                  Alexander John Faithfull},
  title        = {ANN-Benchmarks: {A} benchmarking tool for approximate nearest neighbor
                  algorithms},
  journal      = {Information Systems},
  volume       = {87},
  year         = {2020}
}

@inproceedings{LiZAH20,
  author       = {Conglong Li and
                  Minjia Zhang and
                  David G. Andersen and
                  Yuxiong He},
  title        = {Improving Approximate Nearest Neighbor Search through Learned Adaptive
                  Early Termination},
  booktitle    = {SIGMOD},
  pages        = {2539--2554},
  publisher    = {{ACM}},
  year         = {2020}
}

@inproceedings{Cagra,
  title={Cagra: Highly parallel graph construction and approximate nearest neighbor search for gpus},
  author={Ootomo, Hiroyuki and Naruse, Akira and Nolet, Corey and Wang, Ray and Feher, Tamas and Wang, Yong},
  booktitle={ICDE},
  pages={4236--4247},
  year={2024},
  organization={IEEE}
}

@inproceedings{SONG,
  title={SONG: Approximate nearest neighbor search on gpu},
  author={Zhao, Weijie and Tan, Shulong and Li, Ping},
  booktitle={ICDE},
  pages={1033--1044},
  year={2020},
  organization={IEEE}
}

@article{Starling,
  title={Starling: An I/O-Efficient Disk-Resident Graph Index Framework for High-Dimensional Vector Similarity Search on Data Segment},
  author={Wang, Mengzhao and Xu, Weizhi and Yi, Xiaomeng and Wu, Songlin and Peng, Zhangyang and Ke, Xiangyu and Gao, Yunjun and Xu, Xiaoliang and Guo, Rentong and Xie, Charles},
  journal={Proceedings of the ACM on Management of Data},
  volume={2},
  number={1},
  pages={1--27},
  year={2024},
  publisher={ACM New York, NY, USA}
}

@article{KnngcSurvey,
  title={Revisiting $ k $-Nearest Neighbor Graph Construction on High-Dimensional Data: Experiments and Analyses},
  author={Liu, Yingfan and Cheng, Hong and Cui, Jiangtao},
  journal={arXiv preprint arXiv:2112.02234},
  year={2021}
}

@article{RAG-ACL,
  title={Retrieval-based Language Models and Applications},
  author={Asai, Akari and Min, Sewon and Zhong, Zexuan and Chen, Danqi},
  journal={ACL Tutorial},
  year={2023}
}

@inproceedings{REALM,
  title={Retrieval augmented language model pre-training},
  author={Guu, Kelvin and Lee, Kenton and Tung, Zora and Pasupat, Panupong and Chang, Mingwei},
  booktitle={ICML},
  pages={3929--3938},
  year={2020}
}

@misc{sift,
  title        = {Datasets for approximate nearest neighbor search},
  howpublished = {\url{http://corpus-texmex.irisa.fr/}},
  year         = 2010
}

@article{nasrabadi1988image,
  title={Image coding using vector quantization: A review},
  author={Nasrabadi, Nasser M and King, Robert A},
  journal={IEEE Transactions on communications},
  volume={36},
  number={8},
  pages={957--971},
  year={1988},
  publisher={IEEE}
}

@inproceedings{msong,
  author       = {Thierry Bertin{-}Mahieux and
                  Daniel P. W. Ellis and
                  Brian Whitman and
                  Paul Lamere},
  title        = {The Million Song Dataset},
  booktitle    = {Proceedings of the 12th International Society for Music Information
                  Retrieval Conference, {ISMIR} 2011},
  pages        = {591--596},
  publisher    = {University of Miami},
  year         = {2011}
}

@inproceedings{glove,
  author       = {Jeffrey Pennington and
                  Richard Socher and
                  Christopher D. Manning},
  title        = {Glove: Global Vectors for Word Representation},
  booktitle    = {Proceedings of the 2014 Conference on Empirical Methods in Natural
                  Language Processing, {EMNLP} 2014, {A} meeting of SIGDAT, a Special Interest Group of the {ACL}},
  pages        = {1532--1543},
  publisher    = {{ACL}},
  year         = {2014}
}

@article{FastPG,
  author       = {Shuo Yang and
                  Jiadong Xie and
                  Yingfan Liu and
                  Jeffrey Xu Yu and
                  Xiyue Gao and
                  Qianru Wang and
                  Yanguo Peng and
                  Jiangtao Cui},
  title        = {Revisiting the Index Construction of Proximity Graph-Based Approximate Nearest Neighbor Search},
  journal      = {PVLDB},
  volume       = {18},
  number       = {6},
  pages        = {1825--1838},
  year         = {2025}
}

@article{PGTuner,
  author       = {Hao Duan and
                  Yitong Song and
                  Bin Yao and
                  Anqi Liang},
  title        = {PGTuner: An Efficient Framework for Automatic and Transferable Configuration
                  Tuning of Proximity Graphs},
  journal      = {CoRR},
  volume       = {abs/2508.17886},
  year         = {2025}
}

@article{word2vec,
  title={Distributed representations of words and phrases and their compositionality},
  author={Mikolov, Tomas and Sutskever, Ilya and Chen, Kai and Corrado, Greg S and Dean, Jeff},
  journal={NeurIPS},
  volume={26},
  year={2013}
}

@misc{hnswlib,
  title        = {Header-only C++/python library for fast approximate nearest neighbors},
  year         = {2018},
  howpublished = {\url{https://github.com/nmslib/hnswlib}},
  note         = {Accessed: 2025-05-01}
}

@inproceedings{yang2024vdtuner,
  title={Vdtuner: Automated performance tuning for vector data management systems},
  author={Yang, Tiannuo and Hu, Wen and Peng, Wangqi and Li, Yusen and Li, Jianguo and Wang, Gang and Liu, Xiaoguang},
  booktitle={ICDE},
  pages={4357--4369},
  year={2024},
  organization={IEEE}
}

@inproceedings{van2017automatic,
  title={Automatic database management system tuning through large-scale machine learning},
  author={Van Aken, Dana and Pavlo, Andrew and Gordon, Geoffrey J and Zhang, Bohan},
  booktitle={SIGMOD},
  pages={1009--1024},
  year={2017}
}

@article{Bergstra_Bengio_2012,  
 title={Random search for hyper-parameter optimization}, 
 journal={Journal of Machine Learning Research}, 
 author={Bergstra, James and Bengio, Yoshua}, 
 year={2012}, 
 month={Mar}, 
 language={en-US} 
 }

@article{liashchynskyi2019grid,
  title={Grid search, random search, genetic algorithm: a big comparison for NAS},
  author={Liashchynskyi, Petro and Liashchynskyi, Pavlo},
  journal={arXiv preprint arXiv:1912.06059},
  year={2019}
}

@article{Yang_Emmerich_Deutz_Bäck_2019,  
 title={Multi-Objective Bayesian Global Optimization using expected hypervolume improvement gradient}, 
 url={http://dx.doi.org/10.1016/j.swevo.2018.10.007}, 
 DOI={10.1016/j.swevo.2018.10.007}, 
 journal={Swarm and Evolutionary Computation}, 
 author={Yang, Kaifeng and Emmerich, Michael and Deutz, André and Bäck, Thomas}, 
 year={2019}, 
 month={Feb}, 
 pages={945–956}, 
 language={en-US} 
 }

@article{2014PAMI-scalableNNalg,
  title={Scalable nearest neighbor algorithms for high dimensional data},
  author={Muja, Marius and Lowe, David G},
  journal={PAMI},
  volume={36},
  number={11},
  pages={2227--2240},
  year={2014},
  publisher={IEEE}
}

@article{randomsearch,
  author={Alibrahim, Hussain and Ludwig, Simone A.},
  title={Hyperparameter Optimization: Comparing Genetic Algorithm against Grid Search and Bayesian Optimization}, 
  booktitle={2021 IEEE Congress on Evolutionary Computation (CEC)}, 
  year={2021},
  pages={1551-1559},
  doi={10.1109/CEC45853.2021.9504761}}

@article{PGsurvey25,
  author       = {Ilias Azizi and
                  Karima Echihabi and
                  Themis Palpanas},
  title        = {Graph-Based Vector Search: An Experimental Evaluation of the State-of-the-Art},
  journal      = {Proceedings of the ACM on Management of Data},
  volume       = {3},
  number       = {1},
  pages        = {43:1--43:31},
  year         = {2025}
}

\end{document}